\newcommand{\mypar}[1]{\medspace\noindent\textbf{#1}:}
\newtheorem{intf}{Interference Assumption}
\newtheorem{observation}{Observation}
\newtheorem{lemma}{Lemma}
\newtheorem{corollary}{Corollary}
\newtheorem{theorem}{Theorem}
\newcommand{\reals}{\mathbb{R}}
\def\calC{{\mathcal{C}}}
\def\calF{{\mathcal{F}}}
\def\calL{{\mathcal{L}}}
\def\calL{{\mathcal{L}}}
\def\calP{{\mathcal{P}}}
\newcommand{\prob}[1]{\textsc{#1}}
\newcommand{\ourprob}{\prob{Connectivity Scheduling}}
\newcommand{\steinerprob}{\prob{Steiner Connectivity Scheduling}}
\newcommand{\alg}[1]{\textsf{#1}}
\newcommand{\capalg}{\alg{CapKruskal}}
\newcommand{\defn}[1]{\underline{#1}}
\begin{document}

\title{Spanning Trees With Edge Conflicts and Wireless Connectivity\footnote{The first and last authors are
supported by grants nos.~152679-05 and 174484-05 from the Icelandic Research Fund. The second author is supported by NSF grant 1540547.}}

\author{
  Magn\'us M. Halld\'orsson \\
  ICE-TCS,  Reykjavik University \\
  \url{mmh@ru.is}
   \and
	Guy Kortsarz \\
  Rutgers University, Camden, NJ \\
  \url{guyk@camden.rutgers.edu}
   \and
	Pradipta Mitra \\
  Google Research, New York \\
  \url{ppmitra@gmail.com}
   \and
  Tigran Tonoyan \\ 
  ICE-TCS,  Reykjavik University \\
  \url{ttonoyan@gmail.com} 
}

\begin{titlepage}

\maketitle              

\begin{abstract}
We introduce the problem of finding a spanning tree along with a partition of the tree edges into fewest number of feasible sets, where constraints on the edges define feasibility.
The motivation comes from wireless networking, where we seek to model the irregularities seen in actual wireless environments.
Not all node pairs may be able to communicate, even if geographically close ---
thus, the available pairs are modeled with a link graph $\calL=(V,E)$.
Also, signal attenuation need not follow a nice geometric formulas ---
hence, interference is modeled by a conflict (hyper)graph $\calC=(E,F)$ on the links.
The objective is to maximize the efficiency of the communication, or 
equivalently minimizing the length of a schedule of the tree edges in the form of a coloring.

We find that in spite of all this generality, the problem can be approximated linearly in terms of a versatile parameter, the inductive independence of the interference graph. Specifically, we give a simple algorithm that attains a $O(\rho \log n)$-approximation, where $n$ is the number of nodes and $\rho$ is the inductive independence, 
and show that near-linear dependence on $\rho$ is also necessary.
We also treat an extension to Steiner trees, modeling multicasting, and obtain a comparable result.

Our results suggest that several canonical assumptions of geometry, regularity and ``niceness'' in wireless settings 
can sometimes be relaxed without a significant hit in algorithm performance.
\end{abstract}

\thispagestyle{empty}

\end{titlepage}

\section{Introduction}

We introduce the problem of finding a spanning tree along with a partition of the tree edges into fewest number of feasible sets, which are independent sets in a given conflict (hyper)graph.
The motivation comes from wireless networking, where we seek a basic communication structure while
capturing the irregularities seen in actual wireless environments. 

A spanning tree is the minimal structure for connecting the given set of nodes into a mutually communicable network.
The \emph{cost} of a communication spanning tree is the \emph{time} required to \emph{schedule} all the tree edges -- the transmission \emph{links} -- while obeying the \emph{interference} caused by simultaneous transmissions.

The \emph{scheduling complexity} of the tree represents its throughput capacity: how much communication can be sustained in the long run. The task might be to \emph{aggregate} the data measured at the sensor nodes, or to \emph{broadcast} using one-to-one communication to all nodes of the network.

Algorithmic studies of wireless connectivity to date have generally involved strong ``niceness'' assumptions.
One core assumption is that points are located in the Euclidean plane
and \emph{all (close enough) pairs of nodes are available as links} for use in the spanning tree.
Interference modeling has become progressively more realistic, starting with 
range-based graph models to the fractional \emph{SINR} model of interference, 
but the common thread is that \emph{interference is a direct function of the geometry}.
While natural, these assumptions depend on a simplified view of the nature of wireless communication.

Wireless networking in the real world behaves quite different from these theoretical models \cite{ganesan2002,kotz2004experimental,zamalloa2007} and typically displays a high degree of irregularity.
This manifests in how the strength of signals (and the corresponding interference) often varies greatly within the same region, and is often poorly correlated with distance \cite{baccour2012radio}.
This behavior holds even in simple outdoor environments, but is magnified inside buildings. It is also evidenced by fluctuations, sensitivity to environmental changes (even levels of humidity), and hard-to-explain unreliability.

There has been increased emphasis for greater robustness in the design and analysis of wireless algorithms to address the observed irregularities.
In the world of communications engineering, the default is to introduce stochastic distributions, e.g., on signal strengths. The algorithms world prefers more adversarial effects, but that can easily lead to intractability.

The objective of this work is to embrace this irregularity in connectivity problems.
We replace the previous assumptions by the opposite premises:
\begin{quote}
  \emph{A link may not be usable even if it should be.}  

\hspace*{-5ex} and 

  \emph{Interference need not follow (or even relate to) the underlying geometry.}
\end{quote}

Technically, the former premise means that the set of usable or \emph{available links} is now given as a \defn{link graph} $\calL=(V,L)$.
We place no restrictions on the structure of this graph.
The second premise implies another graph, this time on top of the \emph{links}.
Namely, the \defn{conflict (hyper)graph} $\calC=(L,F)$ specifies whether a given pair of links in $L$ can coexist in the same color (of a spanning tree).
In the \prob{Connectivity Scheduling} problem, we seek a spanning tree $T$ of $\calL$ and a coloring of the links of $T$ 
minimizing the number of colors used.

These formulations naturally raise a number of questions: Can arbitrary sets of available/usable links 
actually handled effectively?
Does it change anything? Can we actually disconnect the conflicts/interference from the geometry? Since the ugly specter of intractability is bound to raise its head somewhere, what are minimal restrictions that keep these problems well-approximable?

\mypar{Our Results}
Given the generality of the \prob{ConnectivityScheduling} problem, it is unsurprising that it is very hard even to approximate.
We show that strong $n^{1-\epsilon}$-approximation hardness holds, even for the natural special case of 2-hop interference.
Instead, we aim to obtain approximations in terms of natural parameters of the instances.

We show that the problem is approximable within $O(\rho \log n)$-factor, where $\rho$ is the \emph{inductive independence} of the (fractional) conflict graph. This is particularly relevant since $\rho$ is known to be constant in both of the predominant interference models: the physical (or SINR) model, and the protocol model. This is attained by a simple greedy algorithm that can be viewed as a combination of Kruskal's MST algorithm and a link scheduling algorithm for the physical model.
In contrast, we find that the (perhaps more natural) approach of selecting and coloring an MST fails badly.

We also generalize the problem to multicast, or Steiner trees, and obtain a similar logarithmic approximation (on graphs), involving a closely related (but slightly different) parameter.

\mypar{Definitions}
In line with a modern view of wireless interference, we represent the interference conflicts by 
\defn{fractional conflict graph} $\calC=(L,W)$.
Here $L$ is the set of communication links and $W: L \times L \rightarrow \reals^+$ 
is a function on ordered pairs of links, where $W(e,f)$ represents 
(or approximates) the degree to which a transmission on link $e$ interferes with a transmission on link $f$.
For convenience, let $W(e,e) = 0$.
We shall write $W(S,e) = \sum_{f\in S} W(f,e)$ and $W(f,S) = \sum_{e \in S} W(f,e)$.
Let $\calC[Y]=(Y,W\mathord{\upharpoonright}_Y)$ denotes the subgraph induced by a given subset $Y \subset L$.

A set $S$ of links is an \defn{independent} or a \defn{feasible set} if $W(S,e) \le 1$, for all $e \in S$. 
A \defn{coloring} of $\calC=(L,W)$ is a partition of $L$ into independent sets.
Observe that when $W$ is a 0-1 function, we have the usual independent sets and colorings of graphs.
Also, the fractional conflicts corresponds to certain hypergraphs that contain a hyperedge for each
minimal set $S'$ where $W(S',e) \ge 1$ for some $e \in S'$.

We can now state our {\ourprob} problem formally:
\begin{quote}
Given a link graph $\calL=(V,L)$ and a fractional conflict graph $\calC=(L,W)$,
we seek a spanning tree $T$ of $\calL$ and a coloring of $\calC[T]$, using the fewest number of colors.
\end{quote}

A fractional conflict graph $\calC=(L,W)$ is said to be $\rho$-\defn{inductive independent}, 
w.r.t.\ an ordering $\prec$ of the links, if
for every link $e$ and every feasible set $I\in \calF$ with $e \prec I$,  
$W(I,e)+W(e,I)\le \rho$, where $e \prec I$ means that $e$ precedes each link in $I$.
Here, ``inductive'' refers to how the interference is measured only towards later links, 
and ``independence'' that it is towards independent sets.
In geometric settings (including range-based and SINR models), $\prec$ corresponds to a non-decreasing ordering by link length.

For a fractional conflict graph $\calC=(L,W)$, let $\chi(\calC)$ denote the smallest number of independent sets into which $L$ can be partitioned;
when $\calC$ is an ordinary graph, $\chi(\calC)$ is the chromatic number of $\calC$.
We view a coloring of $\calC$ also as a \defn{schedule} and refer to the colors also as \defn{slots} (which could be time slots or frequency bands).

\mypar{Notable Instantiations}
{\ourprob} has a number of special cases of independent interest, both graph-based and geometric:

\begin{itemize}
\item When $\calC = L^2(\calL)$, two links conflict if they are incident on a common link.
This case corresponds to bidirectional version of the classic \emph{radio network} model.
The directed version of {\ourprob} was treated in \cite{GHKKO15} as the \emph{radio aggregation scheduling problem}.

\item In \emph{range-based or disk models}, nodes are embedded in the plane
and two links are adjacent if the distance between (the closest points on) them is less than $K$ times the length of the longer link, where $K$ is some fixed constant.
A variation measures the lengths from a particular node on each link. Also, in the the related \emph{protocol} model, adjacency occurs if the distance is less than $K_1$ times the length of the longer link plus $K_2$ times the length of the shorter link, for some constants $K_1, K_2$.

\item The original driving motivation is when nodes and links are 
embedded in a metric space and the fractional conflicts follow the \emph{geometric SINR model} of interference in terms of the lengths and distances between links.
Before this work, only the case when $\calL$ is the complete graph over a set of points in a Euclidean metric was considered.

\item A different geometric version is when we view that no interference at all is transmitted along unavailable links. The links are then unavailable because no signal gets transmitted between this pair of nodes, perhaps due to destructive alignment. We refer to this as the \defn{Missing Links} version.

\item A natural special case occurs when link unreliability is restricted by link length, so that only reasonably long links are unavailable or attenuated, but short links follow the normal SINR laws (short links are reliable). 

\item Finally, when the conflict graph $\calC$ is the \emph{line graph} of the link graph $\calL$,
i.e., $\calC = L(\calL)$, we obtain the well-known \emph{minimum degree spanning tree} (MDST) problem,
where given a graph $\calL$, the goal is to find a spanning tree of smallest maximum degree. 
By K\"onig's theorem, the chromatic number of the line graph of a tree (in fact, of any bipartite graph)
is equal to the maximum degree of the tree. 
This problem has more structure that allows for better solution: while it is NP-hard, it can be approximated within an additive one \cite{FurerRaghavachari}.
In particular, $L(\calL)$ is claw-free (which is stronger than being 2-inductive independence), and is intimately related to $\calL$.

\end{itemize}

\mypar{Related Work}
The connectivity problem in the geometric SINR model was first considered by Moscibroda and Wattenhofer \cite{MoWa06}.
In was, in fact, the first work on worst-case analysis in the SINR model.
They show that unlike in random networks, the worst-case connectivity depends crucially on the use of power control,
and with optimal power control, $O(\log^4 n)$ slots suffice to connect the nodes.
They soon improved this $O(\log^2 n)$ \cite{moscibroda06b,Moscibroda07}.
Currently, the best upper bounds known are $O(\log n)$~\cite{SODA12}
and $O(\log^* \Lambda)$ \cite{us:mobihoc17poster}, where $\Lambda$ is the ratio between the longest to the shortest length of a link in a minimum spanning tree (MST), 
a structural parameter that is independent of $n$.
Both of these results hold for the MST of the pointset; there are pointsets where $\Omega(\log^* \Lambda)$ slots are necessary for scheduling an MST \cite{us:mobihoc17poster}. 

The scheduling complexity of connectivity relates closely to the efficiency of \emph{aggregation}, a key primitive for wireless sensor networks. 
We refer the reader to~\cite{IncelGK11} for bibliography on  aggregation/collection problems.

There are many approaches that have been proposed to model irregularity in wireless networks.
We first examine static cases, or the modeling of non-geometric behavior.
The basic SINR model allows the pathloss constant $\alpha$ to be adjusted \cite{kumar00}, giving a first-order approximation of the signal gain.
In the engineering community, it is most common to assume that the deviations are drawn from a particular stochastic distribution,
typically assuming independence of events. 
In the theory camp, the prevailing approach is to view the variations as conforming the plane into a non-Euclidean metric space \cite{FKRV09,SODA11}, while retaining some tractable characteristics. This can also entail identifying appropriate parameters \cite{us:PODC14}. 

For frequent temporal changes, the standard engineering assumption is Rayleigh fading.
Dams et al.\ \cite{dams2015} (see also \cite{us:mobihoc17})
showed that link scheduling algorithms are not significantly affected by such variation, assuming independence across time.

For unpredictably changing behavior, there is much research on adapting to new conditions, particularly with exponential backoff. 
A theoretic model proposed to specifically capture unreliability is the 
\emph{dual graph model} \cite{kuhn2010broadcasting}, which extends the radio network model to a pair of graphs, the reliable and the unreliable links, where the latter are under adversarial control. The focus there is on distributed algorithms for one-shot problems, like global and local broadcast problems, where the nodes do not know which links are reliable. As far as we know, it has not been considered in settings 
involving a long-term communication structure.

Inductive independence was first defined by \cite{AADK02} and studied by \cite{YeB12} in the graph setting, while the weighted version
was introduced by Hoefer and Kesselheim \cite{HoeferKV14}. It has been used as a performance measure for 
various problems related to wireless networks, including
admission control \cite{GHKSV14}, dynamic packet scheduling \cite{K12packet,sicomp17},
and spectrum auctions \cite{HoeferKV14,HoeferK15,sicomp17}.

\mypar{Outline of the paper}
We first examine, in Sec.~\ref{sec:mst}, how the standard approach -- finding a minimum spanning tree -- fares for our problem,
and show that it can give poor solutions in every known interference model when there are missing or unreliable links.
We then give in Sec.~\ref{sec:greedy} a greedy algorithm for \prob{Connectivity Scheduling} achieving $O(\rho \log n)$-approximation, where $\rho$ is the inductive independence number of the conflict graph.
This dependence on $\rho$ is shown to be essentially tight in Sec.~\ref{sec:hardness}.
We also obtain a similar approximation of a \emph{Steiner or multicast} version of the problem in Sec.~\ref{sec:steiner}.

Implications of our results to the SINR (or physical) model are given in Sec.~\ref{sec:sinr}.
The rest of the paper can safely be read without any background in that model.
We then close with open problems.

We give a supplementary result in Sec.~\ref{sec:short}, involving natural geometric interference assumptions. A brief primer on SINR concepts is given in Appendix \ref{sec:sinrdefs}, for completeness.

\section{MST Fails}
\label{sec:mst}

In a basic setting, the nodes are located on the plane, and the interference between two links is a function of 
the lengths of links (distance between the two end-nodes), and the distance between the links. For instance, 
in the SINR model, the interference between two links is a decreasing function of their distance,
and an increasing function of the length of the interfered link. In this setting, the Euclidean minimum spanning
tree (MST) over the set of nodes is a natural candidate for connectivity, since it favors short links and has
low degree (or, more generally, contains few links in the vicinity of any node). Indeed, the 
MST of $n$ nodes can be scheduled in $O(\log n)$ slots in the Euclidean SINR model \cite{SODA12}.

Somewhat surprisingly, we find that when the set of possible links is restricted, the MST can actually fail quite badly. 
This holds in every reasonable model of interference.

\begin{intf} We say that an interference model is \defn{reasonable} if: a) incident links cannot be scheduled together,
while b) \defn{sparse} instances of equal length links can be scheduled in $O(1)$ slots, 
where a set of  length $\ell$ links is sparse if any ball of radius $\ell$ contains
$O(1)$ endpoints of those links.
\label{intf:reasonable}
\end{intf}
Every geometrically-defined wireless interference model known satisfies this reasonableness property.
In particular, this holds in the protocol and Euclidean SINR models.

\begin{theorem}
For any $n$, there is an instance of $n$ nodes embedded on the plane together with a spanning tree that is schedulable in $O(1)$ slots
while scheduling the MST requires $n^{1/3}$ slots, in every reasonable interference model.
\label{thm:mst}
\end{theorem}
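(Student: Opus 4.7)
The plan is to exhibit an explicit planar construction demonstrating the gap. Let $k := \lceil n^{1/3} \rceil$. I would place a ``hub'' $v_0$ at the origin, $k$ ``terminal'' nodes $u_1, \ldots, u_k$ positioned near (but not too close to) $v_0$, and a ``backbone'' of the remaining $n-k-1$ nodes arranged in a long straight path far from the hub. The link graph $\calL$ is restricted to contain only: a short spoke $(v_0, u_i)$ from the hub to each terminal; one ``bridge'' edge $(u_i, b_{f(i)})$ from each $u_i$ to a distinct, well-separated backbone node $b_{f(i)}$; and the backbone's own unit-length path edges. Crucially, $\calL$ contains no direct edges $(u_i, u_j)$ between terminals, and the backbone edges are chosen to be longer than the spokes and bridges.

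The key observation is that with spokes and bridges both shorter than backbone edges, Kruskal's algorithm processes them first. When each spoke $(v_0, u_i)$ is considered, the vertex $u_i$ lies in its small component $\{u_i, b_{f(i)}\}$ (which has not yet been merged with the rest, because the backbone edges haven't been processed), so every spoke connects two distinct components and is accepted into the MST. Thus the MST has degree $k$ at $v_0$, and by interference assumption~(a), scheduling it requires $\Omega(k) = \Omega(n^{1/3})$ slots.

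The alternative spanning tree $T^*$ uses one spoke $(v_0, u_1)$, all $k$ bridges, and the backbone path, snaking through the backbone while attaching to each $u_i$ via its bridge. It has maximum degree $3$, and its edges partition into $O(1)$ length classes: the singleton spoke, the uniform-length bridges, and the unit-length backbone. With careful placement of the $u_i$'s and the backbone line, each length class is sparse in the sense of (b): the backbone is trivially sparse as a straight unit-spaced line; and for the bridges, by arranging the endpoints on both sides (the $u_i$ side and the $b_{f(i)}$ side) at a spatial spacing comparable to the bridge length, every ball of radius equal to the bridge length contains only $O(1)$ bridge endpoints. By (b), the alternative tree is then schedulable in $O(1)$ slots.

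The main obstacle is reconciling two opposing geometric requirements: the $u_i$'s must be close enough to $v_0$ for each spoke to be strictly shorter than its bridge (so that Kruskal includes the spoke), yet spread enough for the bridge class to be sparse at the scale of the bridge length. Balancing these via a 2D packing argument -- where the terminals occupy a carefully chosen configuration whose diameter and spacing together allow both conditions to hold -- is exactly where the exponent $1/3$ arises. Verifying this balance rigorously, ensuring no unintended shorter path through $\calL$ spoils the forced MST structure, and confirming sparsity in every length class constitutes the technical heart of the proof.
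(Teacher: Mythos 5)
Your high-level plan --- force a vertex of degree $k=\Theta(n^{1/3})$ into the MST while exhibiting an alternative sparse tree --- matches the paper's, but the concrete construction has a genuine gap, and it sits exactly where you defer to ``the technical heart.'' To make Kruskal accept all $k$ spokes, each spoke $(v_0,u_i)$ must be processed before the only other edge at $u_i$, i.e.\ the spoke must be shorter than the bridge $(u_i,b_{f(i)})$. Hence all $k$ terminals $u_i$ lie in a disk around $v_0$ of radius smaller than the bridge length, so a single ball of radius equal to the bridge length contains all $k$ bridge endpoints on the $u_i$ side: the bridge length class is never sparse, no matter how the backbone is placed, and assumption (b) cannot be applied to it. Worse, in concrete reasonable models (SINR, protocol) such a crowded set of $k$ bridges itself needs $\Omega(k)$ slots, and since every spanning tree of your link graph must use, for each $u_i$, either its spoke or its bridge, \emph{every} spanning tree of your instance is expensive --- there is no $O(1)$-schedulable tree to compare against. (The description is also internally inconsistent: bridges shorter than the unit backbone edges, with terminals near $v_0$, would require $k$ distinct unit-spaced collinear backbone nodes within distance about $2$ of $v_0$.) So the two requirements you hope to ``balance'' are not merely in tension in this topology; they are incompatible.

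The paper escapes this with an idea missing from your proposal: each terminal is \emph{not} attached to the alternative structure by a single private link. Instead, each of $k$ arms is a path of $\Theta(k^2)$ unit-length (``tiny'') links running from distance about $k$ (where an ``ordinary'' hub link of length about $k$ attaches it to the common origin) out to radius $\Theta(k^2)$, where consecutive arms are joined by ``yuge'' rim links of length $\Theta(k)$ --- slightly longer than the ordinary links, so the MST still takes all $k$ ordinary links and needs $k$ slots by incidence at the origin, while the rim links have endpoints spread at spacing comparable to their own length and are sparse, as are the tiny links. The need to push the rim out to radius $\approx k^2$ so that rim links exceed the length-$k$ hub links is precisely what forces $n\approx k^3$ and yields the exponent $1/3$; in your construction that exponent has no derivation, because the underlying packing problem has no solution.
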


\begin{wrapfigure}{r}{0.5\textwidth}
\scalebox{0.7}{\tikzstyle{vertex}=[circle,fill=black!25,minimum size=20pt,inner sep=0pt]
\tikzstyle{selected vertex} = [vertex, fill=red!24]
\tikzstyle{edge} = [draw,thick,-]
\tikzstyle{weight} = [font=\small]
\tikzstyle{sparse} = [draw,line width=2pt,-,orange!50]
\tikzstyle{ignored edge} = [draw,line width=5pt,-,black!20]

\begin{tikzpicture} [
    level 1/.style={sibling distance = 2cm, level distance = 2.5cm},
    level 2/.style={sibling distance = 1cm, level distance = 0.6cm},
    level 3/.style={sibling distance = 1cm, level distance = 0.6cm},
    every node/.style={circle, draw=black,thin, minimum size = 0.3cm},
    emph/.style={edge from parent/.style={red,very thick,draw}}, 
    empty/.style={edge from parent/.style={draw=none}},
    norm/.style={edge from parent/.style={blue,line width=2pt,draw}},
    mst/.style={edge from parent/.style={red,line width=2pt,draw}},	
  ]

  \begin{scope}[xshift=6cm]
    \node{} 
    child[norm] { node {}
      child[norm] { node {} 
		child[norm] { node {} 
			child[norm] { node {} 
				child {node[draw=none,fill=none] {$\vdots$}
					child[norm] { node (fe) {} }
					child[empty] { node[draw=none,fill=none] {} }	
				}
				child[empty] { node[draw=none,fill=none] {} }	
			}
			child[empty] { node[draw=none,fill=none] {} }	
	     }
	     child[empty] { node[draw=none,fill=none] {} }	
	}
	child[empty] { node[draw=none,fill=none] {} }	
    }
    child[mst] { node (ms) {}
      child[norm] { node {} 
		child[norm] { node {} 
			child[norm] { node {} 
				child {node[draw=none,fill=none] {$\vdots$}
					child[norm] { node (me) {} }
				}
			}
	     }
	}
    }
    child[mst] { node {}
	child[empty] { node[draw=none,fill=none] (ls) {} }
      child[norm] { node {} 
		 child[empty] { node[draw=none,fill=none] {} }
		child[norm] { node {} 
			child[empty] { node[draw=none,fill=none] {} }	
			child[norm] { node {}
				child[empty] { node[draw=none,fill=none] {} }	 
				child {node[draw=none,fill=none] {$\vdots$}
					child[empty] { node[draw=none,fill=none] {} }	
					child[norm] { node (le) {} }
				}
			}
	     }    	
	}	
    }
    ;
\path (ms) -- (ls) node[draw=none,fill=none] [midway] {$\ldots$};
\path[sparse,dotted] (me) -- (le);
\path[sparse] (me) -- (fe);
  \end{scope}
\end{tikzpicture}}
\end{wrapfigure}

\begin{proof}
Let $k\ge 1$ be a number and $K = 2 k^2$.
Let $V = \{o \} \cup \{v_{i,j} : i=0, 1, \ldots, k-1, j=0,1, \ldots, K -1\}$.
We position the nodes in the plane using polar coordinates, with the
node $o$ as the origin.
For node $v_{i,j}$, angular coordinate 
$r_{i,j}$ is $2\pi \cdot i/k$, while its radial coordinate is $k + j$.

The links are given by $L = O \cup T \cup Y$, where
$O = \{(o,v_{i,1}) : i=0,\ldots, k-1\}$,
 $T = \{(v_{i,j},v_{i,j+1}) : i=0,\ldots, k-1, j=0, \ldots, K-2\}$,
 $Y = \{(v_{i,K-1},v_{i+1 \bmod k ,K-1}) : i=0,\ldots, k-1\}$, 
or the \emph{ordinary}, the \emph{tiny} and the \emph{yuge} links.
That is, the link graph is in the form of a wheel, centered at origin, with $k$ spokes, and $K$ nodes on each spoke. Ordinary links are incident with the origin, while the yuge links form the tire of the wheel.

We observe that $d(v_{i,K-1},v_{i+1 \bmod k, K-1}) > k = d(o,v_{i',1})$, for any $i,i'$. Thus, the MST consists of the ordinary and tiny links, $S \cup L$. Since all the ordinary links have an endpoint in the origin, they must all be scheduled in different slots, implying that the MST requires $k = \Theta(n^{1/3})$ slots.

On the other hand, a more efficient solution is to use the set $Q$, consisting of $T$, $Y$ and one (arbitrary) link from $O$. 
This set $Q$ is sparse, and therefore can be scheduled in $O(1)$ slots.
\end{proof}

This same example shows why the known results for Euclidean SINR do not carry over to general metric spaces (even without missing links). Namely, one could simply form a metric space on the $n$ nodes by shortest-path distances in the link graph.

One way to try to overcome the hard example above would be to consider \emph{bounded degree minimum spanning trees}. However,
the example can be modified so that the maximum degree of the resulting link graph $\calL$ is at most $3$, but the result is
similar. To this end, one can replace the top vertex $o$ in the construction with a chain of $k$ equally spaced nodes connected into a simple path, where 
each node is incident with one ordinary link. The mutual distances between ordinary links are
still very small
compared with their lengths, and hence they must all be scheduled separately. 

This limitation result holds in interference models where very close links cannot be scheduled simultaneously.
This is a relaxation of Inteference Assumptions (IA) \ref{intf:sparsedense} introduced later.

\section{Greedy Algorithm}
\label{sec:greedy}

A natural greedy approach is to find a large feasible subset of edges, assign it a fresh color, and iterate on a contracted graph.
The key step is obtaining a constant-approximation for a maximum feasible subset. 
A logarithmic approximation then follows from a set cover argument.

We assume in this section that  $\calL$ can have parallel edges but no loops. We assume that the conflict graph 
$\calC$ is $\rho$-inductive independent for a number $\rho>0$, and the corresponding 
conflict function $W$ and ordering of edges $\prec$ are given. In the \defn{maximum feasible forest problem}, 
the goal is to find a maximum cardinality subset of edges of $\calL$, which is both independent in $\calC$ and acyclic in $\calL$.

 The algorithm, given as Alg.~\ref{alg:KK}, is a greedy Kruskal-like algorithm that mixes the edge selection criteria of wireless capacity 
algorithms \cite{SODA11,KesselheimSODA11} with the classic MST algorithm of Kruskal, thus the name {\capalg}. 
 It processes the edges in order of precedence $\prec$ and adds an
edge to the forest if: a) the interference on that edge from previously selected edges is small, and b) the edge does
not induce a cycle (as per Kruskal).  
We state it in terms of the classic union-find operations of \textsc{MakeSet},
\textsc{Connected}, and \textsc{Union}.

\begin{figure*}[htt!]
\hfill
 \begin{minipage}[t]{2.6in}
 \begin{algorithm}[H]
\caption{{\capalg}($\calL,\calC$)}          
\label{alg:KK}
\begin{algorithmic}[1]                    
  \STATE \textsc{MakeSet}$(v)$, for each $v \in V(\calL)$
   \STATE $S \leftarrow \emptyset$
   \FOR{$e = (u,v)$ in $L$ in $\prec$ order }
      \IF{$W(S,e)+W(e,S) \le 1/2$ and \NOT \textsc{Connected}$(u,v)$}
         \STATE $S \leftarrow S \cup \{e\}$
         \STATE \textsc{Union}$(u,v)$ \\
      \ENDIF
   \ENDFOR
   \RETURN $S' = \{e \in S : W(S,e) \le 1\}$
\end{algorithmic}
\end{algorithm}
 \end{minipage}
 \hfill
 \begin{minipage}[t]{2.2in}
\begin{algorithm}[H]
\caption{\textsc{Conn}($\calL, \calC$)}          
\label{alg:Conn}                           
\begin{algorithmic}[1]                    
   \STATE $i\leftarrow 0$
   \STATE $\calL_0\leftarrow \calL$
   \WHILE{$\calL_i$ has an edge}
      \STATE $S_i \leftarrow \capalg(\calL_i, \calC[\calL_i])$
      \STATE $\calL_{i+1} \leftarrow Contract(\calL_i,S_i)$
      \STATE $i \leftarrow i + 1$
   \ENDWHILE
   \RETURN $S_0, S_1, \ldots, S_{i-1}$
\end{algorithmic}
\end{algorithm}
 \end{minipage}
 \hfill
\end{figure*}

Recall that a subset $S$ of edges in $\calL$ is feasible if $W(S,e) = \sum_{f\in S} W(f,e) \le 1$, for all $e \in S$.
Define the ordered weight function $W^+$ as $W^+(e,f) = W(e,f)$ if $e \prec f$, and $W^+(e,f)=0$, otherwise.
Similarly, define $W^-$ as $W^-(e,f) = W(e,f)$ if $f \prec e$, and $W^-(e,f)=0$, otherwise.
Also define the cumulative versions $W^+(S,e)$, $W^+(e,S)$ as before.

We say that a set $S$ is \defn{semi-feasible} if for each $e\in S$, $W^+(S,e)+W^-(e,S)\le 1/2$.
Namely, the weighted indegree from shorter nodes and to longer nodes is bounded, but the total indegree of $e$ may not be.
By an averaging argument, a semi-feasible set $I$ contains a feasible subset of at least half its size.
Indeed, using semi-feasibility and sum rearrangements, we have,
\begin{equation}
\sum_{e\in S}W(S,e) = \sum_{e\in S}\left(W^+(S,e)+W^-(e,S)\right)  \le \frac{|S|}{2}
\label{eq:star}
\end{equation}
so for at least half of the links $e\in S$ it holds that $W(S,e)\le 1$.

\begin{theorem}\label{T:forest}
Let $F$ be a maximum feasible forest of $\calL$. Then ${\capalg}(\calL,\calC)$ outputs a feasible forest of size  $\Omega(|F|/\rho)$. 
\label{thm:kk}
\end{theorem}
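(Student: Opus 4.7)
The plan is to partition the maximum feasible forest $F$ by reason for non-selection and charge each piece to $|S|$. When the algorithm processes $e\in F$, let $S_{<e} = \{f\in S : f\prec e\}$ denote the current contents of $S$. The edge $e$ is not added exactly when either (a) the test $W(S_{<e},e)+W(e,S_{<e})\le 1/2$ fails, or (b) its endpoints are already connected in $S_{<e}$. Write $F_a$, $F_b$ for these two sets of blocked edges and $F_1 = F\cap S$; so $|F| = |F_1| + |F_a| + |F_b|$. The goal is to show that each of these three quantities is $O(\rho|S|)$, and in particular $|F| = O(\rho |S|)$, from which the theorem follows together with $|S'| \ge |S|/2$. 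The latter is immediate: once $e$ enters $S$, the quantity $W^+(S,e)+W^-(e,S)$ can only gain contributions from links $f\succ e$ (which contribute $0$ to $W^+(\cdot,e)$ and $W^-(e,\cdot)$), so $S$ is semi-feasible at the end, and the averaging in (\ref{eq:star}) hands us $S'\subseteq S$ with $|S'|\ge |S|/2$ and $W(S',e)\le W(S,e)\le 1$ for all $e\in S'$. Since $S$ is a forest by the Kruskal cycle check, so is $S'$.

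For $F_a$, I would invoke $\rho$-inductive independence. The key observation is that $F_a\subseteq F$ is itself feasible (feasibility is downward closed), so for every $f\in S$ the tail $F_a^{>f} := \{e\in F_a : f\prec e\}$ is a feasible set entirely after $f$. Inductive independence therefore gives
\[
  W(f,F_a^{>f}) + W(F_a^{>f},f) \;\le\; \rho.
\]
Summing over $f\in S$ and swapping the order of summation converts this into a sum over $e\in F_a$ of $W^+(S,e)+W^-(e,S)$. On the other hand, every $e\in F_a$ contributes strictly more than $1/2$ to that latter sum, since $e$ failed the interference test at its turn and $S_{<e}$ is precisely the $\prec$-earlier part of $S$. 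Combining,
\[
  \tfrac{1}{2}|F_a| \;<\; \sum_{e\in F_a}\bigl(W^+(S,e)+W^-(e,S)\bigr) \;=\; \sum_{f\in S}\bigl(W(f,F_a^{>f})+W(F_a^{>f},f)\bigr) \;\le\; \rho|S|,
\]
which yields $|F_a|< 2\rho|S|$.

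For $F_1$ and $F_b$, I would use a purely graph-theoretic (matroid) argument. Each $e\in F_b$ has its endpoints in a single component of $S_{<e}\subseteq S$, and each $e\in F_1$ obviously does too. So both sets consist of $F$-edges lying inside components of the forest $S$. In any component $T$ of $(V,S)$, since $F$ itself is acyclic, at most $|V(T)|-1$ edges of $F$ can lie inside $T$; summing over components gives at most $|V|-c(S) = |S|$ such edges in total. Hence $|F_1|+|F_b|\le |S|$. Plugging everything together, $|F|\le (1+2\rho)|S|$, so $|S'|\ge |S|/2 = \Omega(|F|/\rho)$.

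The part most likely to need care is the swap of summation in the $F_a$ bound, and in particular making sure $F_a$ really is a feasible set so that $\rho$-inductive independence applies to its tail slices $F_a^{>f}$; once that is clean, the argument is a standard wireless-capacity-style charging combined with the elementary counting of $F$-edges inside $S$-components. The rest of the proof is routine accounting.
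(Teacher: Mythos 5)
Your proof is correct and follows essentially the same route as the paper's: the same three-way decomposition of the optimal forest (selected, interference-blocked, connectivity-blocked), the same inductive-independence charging with the $1/2$-threshold for the interference-blocked edges, and the same semi-feasibility/averaging step giving $|S'|\ge |S|/2$. The only (cosmetic) difference is that you bound $|F_1|+|F_b|\le |S|$ in one forest-counting step where the paper bounds $|I_S|$ and $|I_T|$ separately, which just sharpens the constant.
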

\begin{proof}
Let $S$ and $S'$ be the sets computed in ${\capalg}(\calL,\calC)$.
By definition, $S'$ is feasible. 
To argue that $S'$ is large, we examine an arbitrary feasible forest, break it into three parts,
 and show that none of the parts can be too large compared to $S'$. 
 This will hold, in particular, for the optimal feasible forest.
By (\ref{eq:star}), we can focus on bounding $|S|$, as $|S'|\ge |S|/2$. 

Let $I$ be a feasible forest. Observe that the selection condition of the algorithm is equivalent to $W^+(S,e)+W^-(e,S) \le 1/2$, 
since the edges are considered in the order of $\prec$.
Let $I_R$ be those edges $e$ in $I$ that failed the degree condition  ($W^+(S,e)+W^-(e,S) > 1/2$),
and $I_T$ those edges $e=(u,v)$ in $I$ that failed the connectivity condition ($\textsc{Connected}(u,v)$).
The rest, $I_S = I \setminus (I_R \cup I_T)$ are contained in $S$.
We bound these sets in terms of $S$.

Since $I_T$ contains only edges inside components that $S$ also connects (recalling that $I$ induces a forest), $|I_T| \le |S|$.
Also, clearly $I_S \subseteq I \cap S \subseteq S$, so $|I_S| \le |S|$.
To bound the size of $I_R$, observe first that by the definition of $\rho$-inductive independence,
$W^-(I_R,f) + W^+(f,I_R) \le \rho$, for every edge $f\in S$. This implies that 
\[
  W^-(I_R,S)+W^+(S,I_R) = \sum_{f \in S} \left[W^-(I_R,f) + W^+(f,I_R)\right] \le \rho \cdot |S|.
\]
On the other hand, by the selection criteria, 
\[
  W^+(S,I_R)+W^-(I_R,S) = \sum_{e \in I_R} \left[ W^+(S,e)+W^-(e,S)\right]  
    > \sum_{e \in I_R} \frac{1}{2} = \frac{|I_R|}{2}.
\]
Thus, $|I_R| \le  2\rho \cdot |S|$ and $|I| \le (2\rho+2)|S|\le 4(\rho+1)|S'|$.
\end{proof}

\mypar{Connectivity Scheduling Algorithm}
The algorithm \textsc{Conn} repeatedly calls $\capalg$ to obtain a large independent set of links and 
assigns it to a new color class.
These links are then contracted and the process repeated until we have obtained a spanning tree.
 
The contraction of an edge is defined in the standard way, except we discard loops. Note that \emph{contraction leaves the conflict graph $\calC$ intact}.
The operation $Contract(\calL,S)$ contracts all edges in $S$ of a link graph $\calL$ and outputs the resulting graph.

The pseudocode of the algorithm is given in Alg.~\ref{alg:Conn}. The proof of the following theorem follows the classic set cover argument \cite{Johnson74a}.

\begin{theorem}\label{T:inductiveness}
  \textsc{Conn} terminates in $O(\rho \log n) \cdot \chi$ rounds, where $\chi$ is the number of colors needed for coloring an optimum spanning tree.
\label{thm:conn}
\end{theorem}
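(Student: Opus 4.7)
The plan is a standard set-cover-style analysis tracking the decrease of $n_i := |V(\calL_i)|$ across iterations. Initially $n_0 = n$, and the algorithm halts once $n_i = 1$. Because each edge of $S_i$ is a non-loop edge whose contraction reduces the vertex count by one (as $S_i$ is a forest), we have $n_{i+1} = n_i - |S_i|$. It therefore suffices to show that $n_i - 1$ shrinks by a multiplicative factor of $1 - \Theta(1/(\rho\chi))$ per round, yielding termination in $O(\rho\chi \log n)$ rounds via the standard estimate $n_k - 1 \le (n-1)\,e^{-\Theta(k/(\rho\chi))}$.

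The heart of the argument is a lower bound of $(n_i-1)/\chi$ on the maximum feasible forest of $\calL_i$, so that Theorem~\ref{T:forest} (applied to $\calL_i$ with conflict graph $\calC[\calL_i]$, whose inductive-independence parameter is inherited from $\calC$) gives $|S_i| = \Omega((n_i-1)/(\rho\chi))$. To prove the bound, fix an optimal spanning tree $T^*$ of $\calL$ together with a $\chi$-coloring of its edges into feasible classes, and project $T^*$ onto $\calL_i$: each $e=(u,v)\in T^*$ either collapses to a self-loop (if $u,v$ lie in the same super-vertex) or maps to an inter-component edge of the multigraph $\calL_i$. Since $T^*$ spans $V$, the inter-component edges of $T^*$ form a connected spanning subgraph of $\calL_i$ and thus contain a spanning tree $T_i$ of $\calL_i$ with exactly $n_i-1$ edges, each corresponding to a distinct edge of $T^*$. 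By pigeonhole, some color class contributes at least $(n_i-1)/\chi$ edges to $T_i$; this subset is feasible in $\calC[\calL_i]$ (subset of a feasible set) and acyclic in $\calL_i$ (subset of $T_i$), as required.

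The main subtlety to beware of is the projection step: because $C_i := S_0 \cup \cdots \cup S_{i-1}$ may contain edges that do not belong to $T^*$, the naive image of $T^*$ under contraction is in general \emph{not} a tree in $\calL_i$ — a path in $T^*$ whose interior vertices get merged into a single super-vertex produces parallel edges or short cycles in $\calL_i$. The remedy is to pass to a spanning tree of that image, which still consists entirely of edges of $T^*$ and thus inherits the $\chi$-coloring; this preserves the $(n_i-1)/\chi$ bound and keeps the set-cover recursion tight. With that in hand, chaining Theorem~\ref{T:forest} with the recurrence $n_{i+1}-1 \le (n_i-1)(1 - c/(\rho\chi))$ closes the argument.
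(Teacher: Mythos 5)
Your proposal is correct and follows essentially the same route as the paper: combine the $\Omega(x_k/\rho)$ guarantee of Theorem~\ref{T:forest} on each contracted graph $\calL_k$ with the pigeonhole bound $x_k \ge (n_k-1)/\chi$ obtained by projecting an optimal colored spanning tree onto $\calL_k$, and conclude geometric decay of the vertex count. Your per-round recurrence $n_{i+1}-1 \le (n_i-1)\bigl(1-c/(\rho\chi)\bigr)$ is simply a more direct form of the paper's ``halve the vertex count every $\lceil c\rho\chi\rceil$ iterations'' averaging argument, and your careful projection step just fills in the detail the paper dispatches with ``by the pigeonhole principle.''
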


\begin{proof} Let $S_0,S_2,\ldots,S_i$ be the collection of edge-sets returned by $Conn$. For each index $k$, denote $s_k=|S_k|$, $n_k=|V(\calL_k)|$ and $x_k$ the cardinality of the optimum independent (in $\calC[\calL_k]$) forest in $\calL_k$.  Note that $\calC[\calL_k]$ is also $\rho$-inductive independent. Let $c\rho$ be an upper bound on the approximation ratio of {\capalg}, where $c>0$ is a constant. Hence, by Thm.~\ref{thm:kk}, 
\begin{equation}\label{E:kk}
n_k\ge \frac{x_k}{c\rho}.
\end{equation}
 Observe that $x_k\ge n_k/\chi$ (by the pigeonhole principle), and $n_k=n-\sum_{j<k}n_j$, since each iteration $j$ decreases the number of vertices by $|S_j|$ (as $S_j$ is a forest). Moreover, we can assume that $n_1,n_2,\ldots$ is a non-decreasing sequence, as otherwise we could rearrange the sets $S_k$ without violating (\ref{E:kk}). Thus, using monotonicity of $n_k$ and (\ref{E:kk}), we have
 \[
\frac{\sum_{j<k}n_j}{k-1}\ge n_k\ge \frac{n-\sum_{j<k}n_j}{c\rho \chi},
 \]
 so taking $k=\lceil c\rho \chi\rceil+1$, we see that $\sum_{j<k}n_j \ge n/2$. Namely, after every $\lceil c\rho \chi\rceil$ iterations the number of nodes is halved. This implies the required bound.
\end{proof}

\section{Reliable Short Links}
\label{sec:short}

We consider here the case when all short links are reliable. This is motivated by experimental results which indicate on one hand that signal strength is
poorly correlated with distance, but also that short links are nevertheless almost always strong and reliable
\cite{zamalloa2007}, with most of the variability in the links of intermediate range.
This is probably the most natural relaxation of the problem involving geometry.

The setting is as follows.
The nodes are located in the Euclidean plane.
There is a threshold distance, normalized to the unit distance, below which all links are reliable. Let $\Pi$ then denote the maximum link length that might be used, e.g., corresponding to the maximum distance at which signals can be properly received.
Then, node pairs of distance in the range 1 to $\Pi$ may or may not be available (= in the link graph),
while pairs within unit distance are all available. We call the links of length at most $1$ \emph{short} links.

We use limited assumptions about the interference model. 
We first define some notions.
By a \defn{$t$-square} we mean a square of side $t$ in the plane.
A square \defn{hits} an edge (or link) if an endpoint of the edge is within the square.
A set of links of length \emph{at most} $\ell$ is said to be \defn{$s$-sparse} if every $\ell$-square 
hits at most $s$ links, and
a set of links of length \emph{at least} $\ell$ is \defn{$d$-dense} if some  $\ell$-square 
hits at least $d$ links.

\begin{intf}
A $s$-sparse set of links can be scheduled in $O(s)$ slots, while a $d$-dense set requires $\Omega(d)$ slots.
\label{intf:sparsedense}
\end{intf}
These assumptions are satisfied by all major interference models defined in the plane (or in doubling metrics); we will argue this for the SINR model in Sec.~\ref{sec:sinr}.

We examine how the approximability of the problem varies with $\Pi$.
It turns out that minimum spanning trees (MST) work well here.

\begin{theorem}
The non-short links of an MST can be scheduled in $O(a)$ slots, 
where $a = \Pi\sqrt{\chi}$ and $\chi$ is the optimum number of slots of a spanning tree.
\label{thm:shortgood}
\end{theorem}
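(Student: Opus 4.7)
The plan is to show that the set of non-short MST links is $O(\Pi\sqrt\chi)$-sparse in the sense of IA \ref{intf:sparsedense}, i.e., every $\Pi$-square in the plane is hit by $O(\Pi\sqrt\chi)$ such edges; IA \ref{intf:sparsedense} then gives the desired $O(\Pi\sqrt\chi)$-slot schedule. The key structural fact is that always-available short links forbid non-short MST edges from living inside a single connected component of the unit-disk graph on $V$.

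First I would establish that every non-short MST edge $e=(u,v)$ connects two distinct connected components of the unit-disk graph on $V$ (call these \emph{clusters}): if $u$ and $v$ lay in the same cluster, the MST cut obtained by deleting $e$ would be crossed by some short link of length $\le 1 < |uv|$, contradicting the MST exchange property. Hence distinct clusters are pairwise unit-separated, and the non-short MST edges form a forest on the clusters. Fixing a $\Pi$-square $Q$ and its $3\Pi$-enclosure $Q'$, every non-short MST edge hitting $Q$ has both endpoints in $Q'$; if $k$ denotes the number of clusters with at least one point in $Q'$, then at most $k-1$ non-short MST edges hit $Q$. It therefore suffices to prove $k = O(\Pi\sqrt\chi)$.

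To bound $k$, I would exploit the optimum tree $T^*$ and its $\chi$-slot schedule: $T^*$ must link the $k$ clusters in $Q'$, and by unit-separation of clusters, every cluster-linking $T^*$-edge has length strictly greater than $1$. By IA \ref{intf:sparsedense}, each of the $\chi$ slots of $T^*$ is a feasible set and hence contains $O(1)$ edges of length $\ge \ell$ per $\ell$-square, for every $\ell \ge 1$. The plan here is a two-scale partition of the cluster-linking $T^*$-edges across a length threshold $t$: short cluster-links (length in $(1,t]$) are bounded via the unit-scale density per slot, while long cluster-links (length $> t$) are bounded via the $t$-scale density per slot. The intent is that short cluster-links group the $k$ clusters into ``blobs'' of diameter $O(t)$ (of which at most $O((\Pi/t)^2)$ fit in $Q'$), while the long cluster-links provide the remaining inter-blob connections; balancing the two bounds at $t = \Theta(\Pi/\sqrt\chi)$ should yield $k = O(\Pi\sqrt\chi)$.

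The main obstacle is this last balancing step. A naive per-slot count of cluster-linking $T^*$-edges gives only $k \le O(\chi\Pi^2)$, since each of the $\chi$ slots can carry $O(\Pi^2)$ edges of length $\ge 1$ in $Q'$. Extracting the square-root improvement requires coupling the choice of $t$ to the geometric packing of clusters in $Q'$ (so that a short cluster-link of length $\le t$ is chargeable to a ``blob'' rather than to a single edge), and then using the $t$-scale feasibility of $T^*$-slots to bound the inter-blob long edges. This is where I would expect to spend the bulk of the formal work.
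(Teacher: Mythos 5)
There is a genuine gap, and it lies in your reduction, not just in the unfinished balancing step. Your plan bounds \emph{all} non-short MST edges hitting a $\Pi$-square $Q$ by $k-1$, where $k$ is the number of unit-separated clusters meeting the $3\Pi$-enclosure $Q'$, and then declares it sufficient to prove $k = O(\Pi\sqrt{\chi})$. That target is false. Take $\Theta(\Pi^2)$ points on a grid of spacing $1+\epsilon$ filling a $\Pi$-square, with all the grid links (length $1+\epsilon$) available: every cluster is a singleton, so $k = \Theta(\Pi^2)$, while the optimum tree is a spanning tree of the grid whose links are $O(1)$-sparse at scale $1+\epsilon$, so $\chi = O(1)$ and $a = \Theta(\Pi)$. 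The theorem still holds for this instance precisely because sparsity in IA~\ref{intf:sparsedense} is measured at the scale of the \emph{maximum length of the set under consideration}: the length-$(1+\epsilon)$ MST edges are $O(1)$-sparse with respect to $(1+\epsilon)$-squares even though $\Theta(\Pi^2)$ of them hit a single $\Pi$-square. Any argument that tries to certify the whole set of non-short MST edges as $O(a)$-sparse at scale $\Pi$ (equivalently, to bound the cluster count in $Q'$ by $O(a)$) cannot succeed; the same example also shows your later ``blob'' bookkeeping cannot rescue the count $k$ itself.

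The missing idea is to split the \emph{non-short MST edges themselves} by length and schedule the pieces at their own scales, which is how the paper proceeds. Medium edges (length in $[1,\sqrt{a}]$) are handled exactly by your unit-separation observation, but applied to $\sqrt{a}$-squares: such a square meets only $O(a)$ clusters, so the medium edges are $O(a)$-sparse at scale $\sqrt{a}$ and IA~\ref{intf:sparsedense} finishes. The real work is only for the long edges (length $>\sqrt{a}$), and there the relevant objects are not unit-disk clusters but \emph{blocks}: maximal subtrees of the MST connected by non-long edges. Lemma~\ref{lem:longlinks} shows each $\Pi$-square meets $O(a)$ blocks by charging to the optimal tree $T_{OPT}$ at several scales: blocks with a long $T_{OPT}$-edge inside the square are bounded by a density argument at scale $\sqrt{a}$ (giving $O(\Pi\sqrt{\chi})$ long $T_{OPT}$-edges per $\Pi$-square), and the remaining blocks are charged to ``linker'' edges of $T_{OPT}$ crossing the square's boundary, partitioned dyadically by length starting at $\sqrt{\chi}$, each dyadic class bounded by a density argument at its own scale, and the resulting geometric series sums to $O(a)$. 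Your proposed single threshold $t=\Theta(\Pi/\sqrt{\chi})$ and the attempt to bound $k$ do not substitute for this multi-scale charging, and the part of your write-up where the $\sqrt{\chi}$ savings must come from is exactly the part left unproven.
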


In the following discussion, we will work with a fixed MST $T$ of the link graph.

We first note that an upper bound of $O(\Pi^2)$ slots holds for the length of any non-improvable schedule for $T$.
We refer to the maximal connected subgraphs of $T$ containing only short links as \defn{clusters}.
A $t$-square \emph{hits} a cluster (or link) if it contains a vertex of that cluster (link).
A $1/2$-square can hit at most one cluster, as otherwise the respective endpoints would be within unit distance and could be connected by a short link. Thus, a given $\Pi$-square $S$ hits at most $(2\Pi+1)^2$ clusters, since it can be covered with that many $1/2$-squares. Note that the non-short links of $T$ that are hit by $S$ are all used to connect clusters hit by the $3\Pi$-square that has $S$ in the center. 
Hence, at most $9\cdot (2\Pi+1)^2$ non-short links are hit by $S$, since there are at most that many clusters hit by $S$, and $T$ is a tree. Thus, non-short links are $O(\Pi^2)$-sparse, and can be scheduled in $O(\Pi^2)$-slots, by IA \ref{intf:sparsedense}.

To obtain a tighter bound, we split the non-short links into \emph{medium} links, of length from $1$ to $\sqrt{a}$,
and \emph{long}, that are longer than $\sqrt{a}$, and treat these separately. The case of medium links is easy. Applying the argument from the previous paragraph to a $\sqrt{a}$-square and medium links hit by it, we see that the medium links are $O(a)$-sparse, and can be scheduled in $O(a)$ slots, by IA~\ref{intf:sparsedense}. The case of long links needs a more delicate argument. We refer to the maximal connected subgraphs of $T$ containing only \emph{non-long} links as \defn{blocks}.
The difficult part is to show that every $\Pi$-square hits $O(a)$ blocks. This is done in Lemma~\ref{lem:longlinks}. The rest of the proof of Thm.~\ref{thm:shortgood}, showing that long links of $T$ can be scheduled in $O(a)$ slots, follows along the same lines as above.

\begin{lemma}
Each $\Pi$-square $S$ hits $O(a)$ blocks.
  \label{lem:longlinks}
\end{lemma}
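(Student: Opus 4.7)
The plan is to show that if $S$ hits $k$ blocks then any spanning tree $T^*$ of $\calL$ must carry $\Omega(k)$ long edges within a bounded neighborhood of $S$; combining this with an IA-based sparsity bound forces $k=O(a)$.

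First I would fix a representative $p_i\in S\cap B_i$ for each of the $k$ hit blocks. The key MST-cycle observation is that \emph{every $\calL$-edge between two distinct $T$-blocks has length at least $\sqrt{a}$}: the $T$-path between its endpoints must traverse a long $T$-edge, so by the cycle property that edge is at least as long as the maximum edge on the $T$-path, hence at least $\sqrt{a}$. Dually, any edge of length strictly below $\sqrt{a}$ stays inside a single $T$-block.

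Let $T^*$ be an optimum spanning tree (achieving $\chi$ slots), and consider the Steiner subtree $T^\circ\subseteq T^*$ on $\{p_1,\dots,p_k\}$. Contracting the non-long edges of $T^\circ$---each confined to a single block---collapses it to a tree on at least $k$ super-vertices (one per hit block), so $T^\circ$ contains at least $k-1$ edges of length $\ge\sqrt{a}$. In parallel, applying IA~\ref{intf:sparsedense} in the upper-bound direction to $T^*$: its long edges, being schedulable within $\chi$ slots, are $O(\chi)$-sparse in every $\sqrt{a}$-square, so tiling a $\Pi$-square with $(\Pi/\sqrt{a})^2=\Pi^2/a$ subsquares of side $\sqrt{a}$ yields at most $O(\chi\cdot\Pi^2/a)=O(a)$ long $T^*$-edges hit by any fixed $\Pi$-square. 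If the $k-1$ long edges of $T^\circ$ can be shown to lie within a $C\Pi$-square around $S$, then $k-1\le O(a)$, as desired.

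The hard step is precisely this localization: a priori $T^*$ may route the $T^*$-paths between the $p_i$'s along detours reaching far outside $S$, so the long edges of $T^\circ$ are not automatically confined near $S$. I plan to address this via an exchange argument exploiting the optimality of $T^*$: a long Steiner edge $e$ lying far outside an $O(\Pi)$-neighborhood of $S$ can be shortcircuited by a direct inter-block edge between two hit blocks (of length at most $\Pi$ and, by the MST cycle property, still at least $\sqrt{a}$) without inflating the schedule length by more than a constant factor and without creating new dense long-edge clusters elsewhere. This should force enough of the $k-1$ long edges of $T^\circ$ to be incident to a $C\Pi$-neighborhood of $S$ to close the argument; making the exchange precise and verifying that the $\chi$-slot schedule is preserved is the crux of the proof.
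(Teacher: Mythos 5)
Your first two steps are sound and match ingredients of the paper's argument: the MST cycle property does imply that every $\calL$-edge joining two distinct blocks has length at least $\sqrt{a}$ (so non-long edges of any spanning tree stay inside a block, and the contracted Steiner subtree argument correctly gives $k-1$ long edges in $T^*$), and the tiling of a $\Pi$-square by $\sqrt{a}$-squares together with IA~\ref{intf:sparsedense} correctly bounds by $O(a)$ the number of long $T^*$-edges hit by any fixed $\Pi$-square. But the localization step you flag as the crux is a genuine gap, and the exchange argument you sketch does not go through. First, the ``direct inter-block edge between two hit blocks'' you want to swap in need not exist: in this section only links of length at most $1$ are guaranteed to be available, and by your own observation any inter-block edge has length at least $\sqrt{a}>1$, i.e.\ it lies exactly in the range where links may be missing. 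Second, even if such an edge existed, IA~\ref{intf:sparsedense} gives a lower bound on $\chi$ only via sets that the \emph{optimum} must schedule; after the exchange you have a different tree whose schedule length is not known to be $O(\chi)$, so you cannot apply the density bound to its long edges without proving schedulability of the modified tree---which is essentially the statement you are trying to establish, so the argument is circular as stated.

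The paper avoids localization altogether. From the same MST/maximality observation it deduces a \emph{per-block} fact: every block hit by $S$ has some vertex incident to a long edge of $T_{OPT}$. Blocks whose long-incident vertex lies inside $S$ (Class 1) are then charged to long $T_{OPT}$-edges hitting $S$, which your density computation already bounds by $O(a)$. The remaining blocks (Class 2) necessarily have vertices both inside and outside $S$, so for each one an edge of $T_{OPT}$ crossing the boundary of $S$ (its ``linker'') is selected; since all short links are available, distinct blocks are mutually at distance more than $1$, so each $1/2$-square hits at most one linker, and a dyadic decomposition of linkers by length, combined again with IA~\ref{intf:sparsedense} applied near the boundary strip of $S$, yields $O(a/2^i)$ linkers in the $i$-th length class and $O(a)$ in total. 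If you want to salvage your route, you would need to replace the exchange by an argument of this local flavor---counting structure of $T_{OPT}$ that is forced to appear inside or on the boundary of $S$---rather than trying to drag faraway long edges of $T^*$ into a neighborhood of $S$.
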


\begin{proof}
 We will account for the number of blocks hit by $S$, by reasoning about their relation to some fixed optimal spanning tree $T_{OPT}$. First, observe that every block hit by $S$ must have a vertex incident with a long edge (of length at least $\sqrt{a}$)  in $T_{OPT}$. Otherwise, such a block would be connected to the remaining vertices of the graph by a non-long edge in $T_{OPT}\setminus T$, which would contradict its maximality and the fact that $T$ is a MST.
  We classify the blocks hit by $S$ as Class 1, containing a vertex inside $S$ which is incident to a long link in $T_{OPT}$, and Class 2, the remaining blocks. We bound the two classes separately.

 Let $t$ denote the number of long edges in $T_{OPT}$, hit by $S$.
  Since $S$ can be covered with $O((\Pi/\sqrt{a})^2) = O(\Pi/\sqrt{\chi})$ $\sqrt{a}$-squares, one of them hits at least $\Omega(t/(\Pi/\sqrt{\chi}))= \Omega(t \sqrt{\chi}/\Pi)$ long edges. Thus, the long edges  of $T_{OPT}$ are $t \sqrt{\chi}/\Pi$-dense, and so by
 IA~\ref{intf:sparsedense}, $\chi = \Omega(t \sqrt{\chi}/\Pi)$. Rearranging, we have that $t = O(a)$. 
This trivially implies that the number of Class 1 blocks hit by $S$ is in $O(a)$.  
  
Next, we consider Class 2 blocks. 
Each such block must have a vertex that is incident to a long edge $e$ in $T_{OPT}$, with both its endpoints outside of $S$. 
Thus, a Class 2 must have vertices both inside and outside $S$, and $T_{OPT}$ uses only short or medium links to connect vertices from the two sides. 
For each Class 2 block $B$, identify a single link used in $T_{OPT}$ to connect the vertices of $B$ inside $S$ to those outside $S$, and refer to it as $B$'s \defn{linker}. Note that each $1/2$-square hits at most one linker, as otherwise the corresponding blocks could be connected with a short edge, contradicting maximality.
\emph{Short} linkers are of length at most $\sqrt{\chi}$,
and hence must have an endpoint in $S$ within distance $\sqrt{\chi}$ from the border of $S$, as they must cross the border. Thus, the total area in $S$ that can contain an endpoint of a short linker is at most $2\Pi \sqrt{\chi}$, and by covering it with $1/2$-squares, we see that there can be at most $O(\Pi \sqrt{\chi})$ short linkers.

We partition the non-short linkers into \defn{$i$-linkers}, of length between $Q:=2^i \cdot \sqrt{\chi} $ and $2Q=2^{i+1}\cdot \sqrt{\chi}$, for $i=0,1, \ldots$.
Let $q_i$ be the number of Class 2 blocks with $i$-linkers.
Observe that an $i$-linker has an endpoint in $S$ within distance $2Q$ from the border of $S$. Thus, the total area in $S$ that can contain an $i$-linker is less than $2\Pi Q$, and
can be covered with $O(\Pi/Q)$ different $Q$-squares.
Thus, some $Q$-square hits $\Omega(Q/\Pi)$ $i$-linkers (each in $T_{OPT}$, and of length at least $Q$),
so $T_{OPT}$ is $\Omega(q_i Q/\Pi)$-dense. Hence, by IA~\ref{intf:sparsedense}, $\chi = \Omega(q_i Q/\Pi)$,
and by rearranging, $q_i = O(\sqrt{\chi} \Pi/2^i) = O(a/2^i)$.
 The total number of Class 2 blocks is then bounded by
 $\displaystyle O(\Pi\sqrt{\chi}) + \sum_{i=0} q_i = O(a) + \sum_{i=0} O(a/2^i) = O(a) \sum_{i=0} 2^{-i} 
  = O(a)\ .$
\end{proof}

The short links are contained in an MST of the complete graph on the pointset. 
Thus, we can derive the following bound on the total number of slots used.

\begin{corollary}
An MST can be scheduled in $\zeta + O(\Pi\sqrt{\chi})$ slots, where
$\zeta$ is the number of slots required in the complete graph setting.
\end{corollary}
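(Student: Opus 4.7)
The plan is to decompose the MST $T$ of the link graph $\calL$ into its \emph{short} edges (length at most $1$) and its \emph{non-short} edges (length in $(1, \Pi]$), schedule the two groups separately, and then concatenate the schedules.

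For the non-short edges, I will simply invoke Theorem~\ref{thm:shortgood}, which already guarantees a schedule of length $O(\Pi\sqrt{\chi}) = O(a)$ slots. So the whole task reduces to bounding the schedule length of the short edges of $T$ by $\zeta$.

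The main step is the following structural claim: the short edges of $T$ are contained in some MST $T^{*}$ of the \emph{complete} Euclidean graph on the pointset. Granting this, $T^{*}$ can be scheduled in $\zeta$ slots by definition of $\zeta$, and restricting that schedule to the short edges of $T$ gives a valid schedule of length at most $\zeta$. Concatenating the two schedules yields the claimed bound of $\zeta + O(\Pi\sqrt{\chi})$.

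To prove the structural claim I would use the MST cycle property: an edge belongs to some MST iff it is not the strictly unique heaviest edge on some cycle. Fix any short edge $e=(u,v)$ of $T$ and any cycle $C$ in the complete graph that contains $e$; write $C = e \cup P$ where $P$ is a $u$--$v$ path in the complete graph. If every edge of $P$ has length at most $1$, then every edge of $P$ is present in the link graph $\calL$ (since all pairs within unit distance are available), so $C$ is a cycle in $\calL$, and since $e$ lies in the MST $T$ of $\calL$, $e$ cannot be the strictly unique heaviest edge on $C$. Otherwise $P$ contains an edge of length $> 1 \ge |e|$, which is strictly heavier than $e$, and again $e$ is not strictly uniquely heaviest on $C$. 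Thus every short edge of $T$ lies in some MST of the complete graph; a routine exchange argument (applied one short edge at a time, using that each successive swap preserves the MST property) upgrades this to simultaneous containment in a single MST $T^{*}$.

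The hard part of the overall theorem was already done in Lemma~\ref{lem:longlinks}; here the only subtle point is the ``simultaneous containment'' upgrade in the structural claim, but it follows by the standard matroid exchange argument (the forest of short edges of $T$ and any MST of the complete graph are independent sets in the graphic matroid, and the short-edge forest can be greedily extended to a maximum-weight spanning tree since, as shown above, no short edge is strictly uniquely heaviest on any cycle).
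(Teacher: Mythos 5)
Your top-level route is the same as the paper's: schedule the non-short edges of the MST $T$ of $\calL$ in $O(\Pi\sqrt{\chi})$ slots via Theorem~\ref{thm:shortgood}, and handle the short edges by showing they lie inside an MST of the complete graph, which costs $\zeta$ slots (the paper asserts this containment without proof). Your proof of the containment, however, has a genuine gap at the step upgrading individual to simultaneous membership. The principle you invoke --- ``$F$ is a forest and no edge of $F$ is the strictly unique heaviest edge of any cycle, hence $F$ extends to a single MST'' --- is false when weights tie. Example: vertices $\{1,2,3,4\}$ with $w(12)=w(34)=1$, $w(13)=w(24)=2$, $w(14)=3$; the forest $F=\{13,24\}$ is acyclic and neither edge is a strictly unique maximum on any cycle (each lies in some MST individually), yet the only MSTs are $\{12,34,13\}$ and $\{12,34,24\}$, so no MST contains both. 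Correspondingly, your one-edge-at-a-time exchange is not safe: the tied-heaviest edge you swap out of the created cycle may itself be an edge of $F$ that you placed earlier. Since Euclidean lengths can certainly tie, this cannot be waved away (with distinct lengths your argument would be fine, because the MST of the complete graph would be unique).

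The repair uses the stronger property actually available here: every edge of the complete graph that is no longer than a short edge is itself short, hence present in $\calL$. Therefore each short edge $e\in T$ is a \emph{minimum}-weight edge of the complete graph across the cut determined by the two components of $T-e$ (a strictly lighter crossing edge would have length at most $1$, lie in $\calL$, and contradict $T$ being an MST of $\calL$). With this cut property the exchange argument goes through: if $e\in F$ is absent from an MST $T^{*}$ of the complete graph, adding $e$ creates a cycle that must cross this cut in a second edge $f$ with $w(f)\ge w(e)$, and $f\notin F$ because no edge of $T-e$ crosses that cut; the swap preserves minimum weight and strictly increases $|T^{*}\cap F|$. (Equivalently, run Kruskal on the complete graph with ties broken in favor of the edges of $T$: all edges processed before a short edge of $T$ lie in $\calL$, so the runs on the complete graph and on $\calL$ coincide up to that point and the edge is accepted.) With that fix, the rest of your argument --- invoking Theorem~\ref{thm:shortgood} for the non-short edges and restricting the $\zeta$-slot schedule of $T^{*}$ to the short edges of $T$, which stays feasible since feasibility is preserved under taking subsets --- is correct and matches the paper's intent; also note the slip where you wrote ``maximum-weight spanning tree'' for ``minimum.''
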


In many settings, $\zeta$ is a negligible term, in which case we obtain a $\Pi$-approximation. 
For Euclidean SINR, $\zeta = O(\min(\log n, \log^* \Lambda))$, where $\Lambda$ is the ratio between the length of the longest and the shortest possible link \cite{SODA12,us:mobihoc17poster}.

\mypar{Limitations.}
This bound on the MST is in fact best possible, by the result of Sec.~\ref{sec:mst}.
Namely, in the construction of Sec.~\ref{sec:mst}, 
the threshold under which all links are available is $1$, while $\Pi < 2k$.
As the MST used at least $k$ slots, it amounts to $\Omega(\Pi)$.
On the other hand, the instance is $O(1)$-schedulable, and there are no short links used.

\begin{observation}
There is an instance with $\chi = O(1)$ and no short links, for which an MST requires $\Omega(\Pi)$ slots.
\end{observation}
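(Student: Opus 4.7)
My plan is to recycle the wheel construction from the proof of Theorem~\ref{thm:mst}, rescaled to remove short links, and then to verify both required bounds directly under IA~\ref{intf:sparsedense}. First I would uniformly rescale the construction by a factor $1+\epsilon$ for small $\epsilon>0$: this lifts the tiny (spoke) edges to length $1+\epsilon > 1$ while preserving the ordering of lengths and the topology, so the link graph contains no short links at all. The maximum link length is then $\Pi = \Theta(k)$, realised by the yuge links.

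For the MST lower bound, the MST still consists of the $k$ ordinary links together with all tiny links, since every yuge link is longer than every ordinary link. The $k$ ordinary links are all incident to the origin $o$, so any small square centred at $o$ contains all $k$ of their endpoints. Because the ordinary links have length $\Theta(k)$, they form a $k$-dense set in the sense of IA~\ref{intf:sparsedense}, which forces $\Omega(k) = \Omega(\Pi)$ slots for any MST schedule.

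For the upper bound, I would use the connected spanning subgraph $Q = T \cup Y \cup \{e_0\}$ from the proof of Theorem~\ref{thm:mst}, with $e_0$ any single ordinary link, and then pick any spanning tree inside $Q$ (e.g.\ by deleting one yuge edge; this loses at most one edge without increasing the number of slots). The main obstacle is that IA~\ref{intf:sparsedense} is phrased per length class --- sparseness is measured against a single upper bound $\ell$ on the link lengths in the set --- so it cannot be applied to $Q$ in one shot because $Q$ mixes the scales $1+\epsilon$ and $\Theta(k)$. I would therefore split $Q$ into three length classes and verify sparseness separately: the tiny links are $O(1)$-sparse at scale $1+\epsilon$ because distinct spokes are $\Omega(k)$ apart; the yuge links are $O(1)$-sparse at scale $\Theta(k)$ because their endpoints sit on the outer circle of radius $\Theta(k^2)$ at angular spacing $2\pi/k$; the single link $e_0$ is trivially sparse. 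By IA~\ref{intf:sparsedense} each of the three classes is $O(1)$-schedulable, and concatenating the three schedules gives a schedule of $Q$ of length $O(1)$. Hence $\chi = O(1)$, and combining with the previous paragraph establishes the observation.
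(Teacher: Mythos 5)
Your proof is correct and follows essentially the same route as the paper, which justifies the observation simply by pointing back to the wheel construction of Theorem~\ref{thm:mst} (the MST consists of the ordinary and tiny links and needs $\Omega(k)=\Omega(\Pi)$ slots, while the rim-plus-spokes tree is $O(1)$-schedulable and uses no short links); your rescaling by $1+\epsilon$ and the per-length-class sparseness check against IA~\ref{intf:sparsedense} are just slightly more careful versions of what the paper leaves implicit. The only slip is the claim that distinct spokes are $\Omega(k)$ apart: near the hub adjacent spokes come within distance $2k\sin(\pi/k)\approx 2\pi$, i.e.\ only $\Theta(1)$, but since this still exceeds the diameter $\sqrt{2}(1+\epsilon)$ of a $(1+\epsilon)$-square, the $O(1)$-sparseness of the tiny links (and hence your conclusion) stands.
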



\mypar{More General Interference Models}
We can also incorporate \emph{missing links} by relaxing the interference assumptions to:
\begin{intf}
A $d$-dense set requires $\Omega(d/\rho')$ slots.
\label{intf:sparsedense2}
\end{intf}
This would, for instance, hold when the conflict graph is a subgraph of a disc graph and has (unweighted) inductive independence number $\rho'$.
Our results then hold with an additional $\rho'$ factor.

\begin{corollary}
An MST can be scheduled in $\zeta + O(\rho' \Pi \sqrt{\chi})$ slots, where
$\zeta$ is the number of slots required in the complete graph setting.
\end{corollary}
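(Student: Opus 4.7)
The plan is to re-run the proof of Theorem~\ref{thm:shortgood}, tracking where IA~\ref{intf:sparsedense} enters and substituting IA~\ref{intf:sparsedense2} at each density lower bound. The sparsity-to-scheduling direction of IA~\ref{intf:sparsedense} is untouched by the relaxation, so an $s$-sparse set is still schedulable in $O(s)$ slots; only the density lower bound weakens by a factor of $\rho'$. Consequently, every intermediate quantity that the original argument bounds by equating a density with $\chi$ gains exactly a factor of $\rho'$, and the final bound is multiplied by $\rho'$.

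Concretely, the two density invocations inside Lemma~\ref{lem:longlinks} need attention. In the first, the long edges of $T_{OPT}$ hit by $S$ are shown to be $\Omega(t\sqrt{\chi}/\Pi)$-dense; under IA~\ref{intf:sparsedense2} this now gives $\chi = \Omega(t\sqrt{\chi}/(\rho'\Pi))$, hence $t = O(\rho' a)$ rather than $O(a)$, where $a = \Pi\sqrt{\chi}$. In the second, the $i$-linkers of Class~2 blocks are $\Omega(q_i Q/\Pi)$-dense, now yielding $q_i = O(\rho' a / 2^i)$. The short-linker count of $O(\Pi\sqrt{\chi}) = O(a)$ does not invoke the density lower bound at all and is absorbed into $O(\rho' a)$. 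The geometric series $\sum_i q_i$ still converges since the $2^{-i}$ decay is independent of $\rho'$, so each $\Pi$-square hits at most $O(\rho' a)$ Class~1 and Class~2 blocks combined.

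Given this updated lemma, the long links of the MST are $O(\rho' a)$-sparse (since $T$ is a tree whose long edges connect blocks, and each $\Pi$-square hits $O(\rho' a)$ blocks), and therefore schedulable in $O(\rho' a)$ slots by the sparsity upper bound. Medium links remain $O(a)$-sparse by the argument in Theorem~\ref{thm:shortgood}, so they are schedulable in $O(a)$ slots, which is absorbed in $O(\rho' a)$ since $\rho' \geq 1$. Adding the $\zeta$ slots for the short links---which, exactly as in the preceding corollary, are contained in the MST of the complete graph on the same pointset---gives the total bound of $\zeta + O(\rho' \Pi \sqrt{\chi})$.

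The one place I would expect to have to be careful is in confirming that Lemma~\ref{lem:longlinks} does not hide additional uses of the density lower bound beyond the two flagged above, and that the sparsity-to-scheduling implication really does survive the relaxation unchanged. The latter is natural in the intended Missing Links setting, since the conflict graph is only a subgraph of the original disc-like graph, so any schedule feasible before remains feasible; once both points are verified, the proof reduces to a direct, mechanical reweighting of the original argument with no genuinely new combinatorial obstacle.
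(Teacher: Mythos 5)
Your proof is correct and follows exactly the route the paper intends: the paper itself only remarks that the results of Sec.~\ref{sec:short} carry over with an extra $\rho'$ factor, and your argument is the straightforward fleshing-out of that remark, correctly identifying that only the two density-lower-bound invocations in Lemma~\ref{lem:longlinks} (the bound on $t$ and on the $q_i$) are affected, while the sparsity-to-scheduling upper bound, the medium-link count, and the short-linker count are untouched.
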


The results can also be easily transferred to other doubling metrics satisfying IA \ref{intf:sparsedense}.  The asymptotic $\Pi$-approximation factor then becomes $\Pi^{d/2}$, where $d$ is the doubling dimension of the metric (details omitted).

\section{Multicast Tree Schedules}
\label{sec:steiner}

A natural generalization of {\ourprob} is to allow for a set of optional nodes that can be used in the tree construction but need not. Formally, the node set $V$ contains a subset $X$ of terminals and we seek a Steiner tree that spans all the terminals. As before, we ask also for the shortest schedule of the tree links. We refer to this as the {\steinerprob}.

It is not hard to construct examples for which optimal multicast trees are arbitrarily better than trees that use only the terminals, even in a geometric setting. One instance can be obtained from the example of Sec.~\ref{sec:mst} by restricting the terminals to only the origin and the nodes incident on yuge links.

We give an algorithm for {\steinerprob} with \emph{unweighted} conflict graph $\calC$, and analyse is in terms of a parameter similar to $\rho$ but involving clique covers rather than independence. An unweighted graph $\calC=(L,E)$ is \defn{$\eta$-simplicial} if there is an ordering $\prec$ of $L$ such that for each link $v \in L$, the subgraph induced by $v$'s neighbors that are later in the ordering can be covered with $\eta$ cliques.
We refer to neighbors later in the ordering as \defn{post-neighbors}.
As before, in the geometric setting, the ordering is given by link length.
Observe that $\rho \le \eta$, while the best bound in the other direction is $\eta \le \rho \log n$.

Our algorithm is a reduction to a multi-dimensional version of the Steiner tree (MMST) problem, recently treated by Bil\`o et al.\ \cite{Bilo17}. In MMST, each edge of the input graph has an associated $d$-dimensional weight vector, where the weight of edge $e$ along dimension $i$ indicates how much of the $i$-th \emph{resource} is required by $e$.
The objective is to find a tree that minimizes the $\ell_p$-norm of its load vector, 
where the load vector of a Steiner tree is the sum of the weight vectors of its edges.
We use here the $\ell_{\infty}$-norm, as we want to minimize the maximum use of a resource. They give a greedy $O(\log d)$-approximation algorithm for that case.

Given an instance of {\steinerprob} with link graph $\calL$ and conflict graph $\calC$, our reduction is as follows. Each link $e$ in $\calL$ is  itself (or corresponds to) a resource, so there are $n$ (=number of edges) resources. The weight of link $f$ along dimension $e$ is 1 if $f$ is a post-neighbor of $e$ in the conflict graph $\calC$, and 0 otherwise.

Suppose now that the MMST algorithm of \cite{Bilo17} returns a tree $T$ with $\ell_\infty$-norm $Z$.
Then, the sum of the tree edges along each dimension is at most $Z$, namely, each link (whether in $T$ or not) has at most $Z$ neighbors in $T$. In particular, $\calC[T]$ is $Z$-inductive, and can then be colored greedily using $Z+1$ colors.

On the other hand, consider an optimal tree $T^*$ and let $Z^*$ denote the infinity norm of its load vector.
From \cite{Bilo17}, we know that $Z = O(\log n)\cdot Z^*$.
Let $f$ be a link with $Z^*$ post-neighbors in $T^*$, and let $N_f$ be its set of post-neighbors in $T^*$. By assumption, $\calC[N_f]$ can be covered with $\eta$ cliques,
and thus $N_f$ contains a clique of size at least $|N_f|/\eta = Z^*/\eta$.
It follows that the length of the schedule of the optimal tree is at least the
chromatic number of $\calC[N_f]$, which is at least $Z^*/\eta$.
Thus, our solution yields a $O(\eta \log n)$-approximation.

\begin{theorem}
There is a $O(\eta \log n)$-approximation algorithm for {\steinerprob}, where conflicts are given by a $\eta$-simplicial graph.
\end{theorem}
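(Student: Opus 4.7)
The plan is to carry out the reduction sketched before the theorem statement and verify the two bounds that meet in the middle. Specifically, I would build an instance of the multi-dimensional Steiner tree (MMST) problem of Bil\`o et al.\ with one dimension per link of $\calL$, so the number of dimensions is $d=|L|=O(n^2)$ and hence $\log d = O(\log n)$. The weight vector of a link $f\in L$ has coordinate $e$ equal to $1$ if $f$ is a post-neighbor of $e$ under the simplicial ordering $\prec$ of $\calC$, and $0$ otherwise. Running the $O(\log d)$-approximate MMST algorithm of Bil\`o et al.\ on this instance under the $\ell_\infty$ norm yields a Steiner tree $T$ whose load vector has infinity norm $Z = O(\log n)\cdot Z^*$, where $Z^*$ is the optimal infinity norm.

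Next I would turn the MMST guarantee into a schedule. By the way the weights were defined, the $e$-coordinate of the load vector of $T$ counts exactly the post-neighbors of $e$ that lie in $T$. Thus every link in $\calL$ has at most $Z$ post-neighbors in $T$, meaning $\calC[T]$ is $Z$-inductive with respect to $\prec$. A greedy coloring that processes the edges of $T$ in reverse order of $\prec$ therefore uses at most $Z+1$ colors, producing a schedule of length $O(\log n)\cdot Z^*$.

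The main obstacle is the lower bound: I must argue that $Z^*$ is not much larger than the optimal number of colors $\chi^*$ for an optimal Steiner tree $T^\star$ of the original problem. Let $f$ be a link attaining the infinity norm in the load vector of $T^\star$, and let $N_f$ be its set of post-neighbors in $T^\star$, so $|N_f|=Z^\star$ and $Z^*\le Z^\star$. Because $\calC$ is $\eta$-simplicial, $\calC[N_f]$ admits a clique cover of size $\eta$, and by pigeonhole one of those cliques has size at least $|N_f|/\eta=Z^\star/\eta$. Any schedule of $T^\star$ must assign distinct colors to the links of that clique, so $\chi^*\ge Z^\star/\eta\ge Z^*/\eta$.

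Combining the two estimates gives a schedule of length $O(\log n)\cdot Z^* = O(\eta\log n)\cdot \chi^*$, which is the claimed approximation. The only potentially delicate points are bookkeeping: checking that $T^\star$ really is a feasible MMST solution (so that $Z^*\le Z^\star$), and confirming that the MMST algorithm of \cite{Bilo17} returns a Steiner tree on the same terminal set $X$ and not merely a subgraph, so that the output directly serves as the multicast tree.
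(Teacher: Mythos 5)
Your proposal is correct and follows essentially the same route as the paper: the identical reduction to MMST with one dimension per link and post-neighbor indicator weights, the same $Z$-inductiveness argument giving a greedy $(Z+1)$-coloring, and the same clique-cover lower bound showing the optimum schedule length is at least $Z^*/\eta$. Your extra bookkeeping (distinguishing the MMST optimum from the load of the optimal scheduling tree and noting that the latter is a feasible MMST solution) only makes explicit a step the paper leaves implicit.
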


It is a folklore that $\eta\le 6$ in disk graphs.

\begin{corollary}
There is a $O(\log n)$-approximation algorithm for {\steinerprob}, where conflicts are given by a disk graph.
\end{corollary}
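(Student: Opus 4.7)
The plan is to simply combine the preceding theorem with the classical fact that disk graphs are $6$-simplicial, so the real content is establishing that fact under the ordering used in the geometric setting.

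First I would fix the link ordering: order the disks by non-decreasing radius, so that the post-neighbors of a disk $D_v$ with center $c_v$ and radius $r_v$ are exactly the intersecting disks $D_u$ with $r_u \ge r_v$. I then need to cover the set $N^+(v)$ of post-neighbors by $6$ cliques. The standard device is to partition the plane around $c_v$ into six closed cones of apex angle $60^\circ$ at $c_v$, and assign each post-neighbor $D_u$ to the cone containing $c_u$ (breaking ties arbitrarily on boundaries). It then suffices to show that all post-neighbors whose centers lie in a single cone form a clique.

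So fix a cone and two post-neighbors $D_j, D_k$ with centers $c_j, c_k$ in that cone, and assume without loss of generality $|c_v c_k| \ge |c_v c_j|$. The angle $\angle c_j c_v c_k$ is at most $60^\circ$, and the law of cosines with $a := |c_v c_j| \le b := |c_v c_k|$ gives
\[
  |c_j c_k|^2 \le a^2 + b^2 - ab = b^2 - a(b-a) \le b^2,
\]
so $|c_j c_k| \le |c_v c_k|$. Because $D_k$ intersects $D_v$ and $r_v \le r_j$, we have $|c_v c_k| \le r_k + r_v \le r_k + r_j$, whence $D_j$ and $D_k$ intersect. Thus every cone is a clique, so $N^+(v)$ is covered by at most $6$ cliques, and the conflict graph is $6$-simplicial.

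With $\eta \le 6$ established, the preceding theorem yields a $O(6\log n)=O(\log n)$-approximation, which is exactly the claimed bound. The only nontrivial step is the trigonometric argument above; everything else is invocation. The main pitfall to watch for is making sure the ordering used here (non-decreasing radius) agrees with the one for which the theorem's guarantee on $\eta$ is stated, and that the "post-neighbor" convention in the $\eta$-simplicial definition matches this direction, which it does by construction.
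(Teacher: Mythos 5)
Your proposal is correct and follows the paper's route exactly: the paper simply invokes the folklore fact that $\eta \le 6$ for disk graphs (ordering by radius) and applies the $O(\eta \log n)$-approximation theorem, and your six-cone law-of-cosines argument is precisely the standard proof of that folklore fact, carried out correctly. No gaps; you have merely made explicit what the paper cites as folklore.
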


\section{Hardness of Approximation}
\label{sec:hardness}

It is easy to see that with an arbitrary conflict graph $\calC$, the problem is  hard to approximate.
For instance, if the link graph $\calL$ is already a spanning tree, {\ourprob} becomes simply the classical graph coloring problem (of $\calC$).
We show below that the hardness extends to other more restricted settings.
These results also show that near-linear dependence on $\rho$, the inductive independence, is unavoidable.

We first show that hardness holds when $\calC$ is the square of the line graph of $\calL$, $\calC = L^2(\calL)$.
This corresponds to (bidirectional) 2-hop interferences: two transmission links conflict if they are incident on a common edge.
The reduction is from the \textsc{Distance-2 Edge Coloring} problem in general graphs, 
also known as \textsc{Strong Edge Coloring}: Given a graph $\calL$,
find a partition of the edge set into induced matchings, i.e., induced subgraphs where every vertex is of degree 1.

\begin{theorem}
The {\ourprob} problem is hard to approximate within $n^{1-\epsilon}$-factor, for any $\epsilon > 0$, even when $\calC = L^2(\calL)$.
\label{thm:graphhard}
\end{theorem}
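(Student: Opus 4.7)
The plan is to reduce from \textsc{Strong Edge Coloring}, which is NP-hard to approximate within factor $n^{1-\epsilon}$ for every $\epsilon>0$. Given an instance $G=(V_G,E_G)$, I would construct a link graph $\calL$ by attaching, for each edge $e=(u,v)\in E_G$, two new degree-one pendant vertices $p_e^u,p_e^v$, joined to $u$ and $v$ respectively by fresh edges, and declare $\calC=L^2(\calL)$.

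Two observations drive the reduction. First, every pendant vertex has degree one in $\calL$, so every spanning tree $T$ must contain all $2|E_G|$ pendant edges; the remaining $|V_G|-1$ edges of $T$ form a spanning tree of $V_G$ drawn from $E_G$. Hence $T$ faithfully represents every edge of $E_G$ via its two pendant incarnations. Second, a short case check shows that two pendant edges $(u,p_e^u)$ and $(v,p_f^v)$ conflict in $L^2(\calL)$ iff $u=v$ or $(u,v)\in E_G$; equivalently, the sub-instance induced on pendants exactly mirrors the strong-edge-coloring constraints of $G$ expressed through the chosen endpoints.

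From these observations, the upper direction $\chi({\ourprob}(\calL,\calC))=O(\chi'_s(G))$ follows by coloring the two pendants of each edge $e$ from a doubled palette (say, $(c(e),0)$ and $(c(e),1)$, where $c$ is an optimal strong edge coloring of $G$) and coloring the non-pendant tree edges from a disjoint fresh copy of $c$ --- this last step is valid because pendants cannot shortcut $L^2$-distances between original vertices. The reverse direction converts any $\chi$-coloring of $T$ into a coloring of $E_G$ by labeling each $e$ with a canonical composite of the colors received by its two pendants; the pendant conflict characterization guarantees that the composite is a valid strong edge coloring.

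The main obstacle is making the reverse direction lose at most a constant factor. A naive ``unordered pair of colors per edge'' label only proves $\chi'_s(G)=O(\chi^2)$, which by itself would transfer merely $n^{1/2-\epsilon}$ hardness; moreover, small examples show that an ordered pair can also collapse (two distance-one edges can end up with identical ordered color pairs), so neither naive recipe is immediately tight. To close the gap and obtain the full $n^{1-\epsilon}$ bound, I would either (i) reinforce the gadget so that the two pendants of each edge are forced into a rigid $O(1)$-sized palette determined by a single strong color --- for instance, by attaching extra auxiliary vertices that break the pair/unordered ambiguity --- or (ii) amplify the hardness gap via a standard graph-product self-reduction. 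Since $|V(\calL)|$ is polynomial in $|V_G|+|E_G|$, this transfers the $n^{1-\epsilon}$ inapproximability of \textsc{Strong Edge Coloring} to ${\ourprob}$ on instances with $\calC=L^2(\calL)$.
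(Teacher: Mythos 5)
Your forward direction is fine, and the observation that every spanning tree must contain all $2|E_G|$ pendant edges is a genuinely nice coverage property; but the reduction fails exactly where you flag it, and the gap cannot be closed by any cleverer relabeling of pendant colors, because it is inherent in the gadget itself. In $L^2(\calL)$, two pendant links $(u,p_e^u)$ and $(w,p_f^w)$ conflict iff $u=w$ or $uw\in E_G$, so a coloring of the pendant links is nothing more than a vertex multicoloring of $G$ in which vertex $u$ demands $\deg(u)$ colors and adjacent vertices receive disjoint color sets; it records no distance-two information about \emph{pairs of edges}. Concretely, take $G=K_{d,d}$: its strong chromatic index is $d^2$ (any two edges are within distance one), yet your instance can be scheduled in $O(d)$ slots --- color the $d$ pendants at each left vertex with $1,\dots,d$, the pendants at each right vertex with $d+1,\dots,2d$, and give the $2d-1$ edges of a double-star spanning tree of $K_{d,d}$ fresh colors. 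Hence the optimum of your instance can be as small as $\Theta\bigl(\sqrt{\chi'_s(G)}\bigr)$, so no extraction scheme whatsoever can certify better than a quadratic relation; the loss is structural, not an artifact of ordered versus unordered pairs. (Incidentally, even your fallback claim that unordered pairs yield $\chi'_s(G)=O(\chi^2)$ is not established: if $e=(u,v)$ and $f=(w,x)$ are joined only by the edge $vw$, the pendant constraints allow $c(p_e^u)=c(p_f^x)$ and $c(p_e^v)\neq c(p_f^w)$ with the two pairs swapped, so the pair-labeling need not be a valid strong coloring at all.) Your two rescues do not stand as written: (i) is precisely the missing construction --- you would have to force, for every pair of $G$-edges at distance at most one, a conflict between designated representative links, i.e.\ redesign the gadget --- and (ii) graph products amplify the hardness of \textsc{Strong Edge Coloring}, which is already $n^{1-\epsilon}$-hard; they cannot repair a reduction that only preserves the square root of the optimum.

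The paper's proof uses a different gadget that avoids this collapse: each vertex $v$ of the \textsc{Strong Edge Coloring} instance is split into two copies $v_1\in V_1$, $v_2\in V_2$, and each edge $uv$ is represented by the two bipartite links $v_1u_2$ and $u_1v_2$, so the conflict constraints of $L^2$ act between the links representing the edges themselves: two links in one feasible slot force the absence of the corresponding cross edges in $G'$, and each slot projects to an induced matching, while a strong coloring with $c$ classes converts into $2c$ feasible slots covering $V_1\cup V_2$. Connectivity is then bought cheaply by hanging a balanced binary tree on $V_2$, at a cost of only $O(\log n)$ extra slots, so the two optima agree up to a constant factor plus an additive $O(\log n)$ and the $n^{1-\epsilon}$ hardness of \cite{chalermsook2013} transfers. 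The structural point your pendant gadget misses is that one color class must correspond to one induced matching of $G'$, which requires the conflicts to live on edge-representatives rather than on per-endpoint tokens.
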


\begin{proof}
Given an instance of  \textsc{Strong Edge Coloring} with graph $G'=(V',E')$, we construct an instance of {\ourprob} problem with the graph $\calL$ constructed as follows. Consider a bipartite graph $G''=(V_1,V_2, E)$, as follows. For each vertex $v$ in $V'$, there are two vertices $v_1, v_2$ in $V=V_1\cup V_2$, where $v_i\in V_i$, $i=1,2$.
If $uv \in E'$ then $v_1u_2$ and $v_2 u_1$ are in $E$.
Link graph $\calL$ is obtained from $G''$ by taking a complete binary tree with $|V_2|$ leaves and identifying each leaf with a vertex of $V_2$. The conflict graph is given by a simple graph $\calC$ with vertex set $E$, where $e_1,e_2\in E$ are adjacent in $\calC$ if and only if they form an induced matching in $G'$, i.e., there is no edge in $G'$ connecting an endpoint of $e_1$ to an endpoint of $e_2$. This completes the construction.

First, let us show that a strong edge coloring of $G'$ can be used to construct a spanning tree in $\calL$ with a similar coloring number. Consider a strong coloring that partitions the edges of $G'$ into $c$ color classes $E_1, E_2, \ldots, E_c$. 
Each class $E_i$ induces a pair of feasible slots $S_i$, $S'_i$ in $\calL$, where $S_i = \{v_1u_2 : uv \in E_i\}$ and $S'_i = \{u_1v_2 : uv \in E_i\}$. 
Indeed, since $E_i$ is an induced matching in $G'$, each of the slots $S_i,S'_i$ is also an induced matching in $\calL$ (and hence independent in $\calC$). Note that the edges in these slots cover all vertices of $\calL$, except for the binary tree.
We also add $O(\log n)$ slots to the schedule, two for each layer in the binary tree. The number of slots used then is $O(c + \log n)$. This gives us a connected subgraph of $\calL$ that can be scheduled in $O(c+\log n)$ slots.

Next, consider a  spanning tree of $\calL$ with a corresponding schedule of the edges in slots $S_1, S_2, \ldots, S_t$. Ignoring all edges within the binary tree, we obtain a partition of the edges of the bipartite graph $G''$ between $V_1$ and $V_2$. We claim that each class corresponds to an induced matching in $G'$,  leading to a strong edge coloring of $G'$ with $t$ colors.

Consider a pair of edges $v_1u_2$ and $w_1x_2$ in the same feasible slot.
Since they are feasible, there are no edges $v_1x_2$ nor $w_1u_2$ in $\calL$, and thus no edges $xw$ nor $vu$ in $E$. 
Then $vu$ and $wx$ form an induced matching in $G'$.

Hence, the optimum number of colors in strong edge coloring of $G'$ is within a constant factor plus a logarithmic term of the optimal number of slots needed for scheduling a spanning tree in $\calL$. 
Since the former is hard to approximate within $n^{1-\epsilon}$-factor \cite{chalermsook2013}, so is the latter.
\end{proof}

\begin{theorem}
The graph variant is $n^{1-\epsilon}$-hard to approximate, for any $\epsilon > 0$.
This holds even if the link graph $\calL$ is complete.
\end{theorem}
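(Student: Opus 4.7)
The plan is to reduce from the classical graph chromatic-number problem, which is $n^{1-\epsilon}$-hard to approximate, and to encode a coloring instance $H=(V_H,E_H)$ as an instance of {\ourprob} whose link graph is the complete graph on $n=|V_H|+1$ nodes, such that the optimum schedule length equals $\chi(H)$ exactly.

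The construction I would use introduces a fresh ``root'' node $r$, takes $V=\{r\}\cup V_H$, $\calL=K_V$, and identifies each \emph{star} edge $\{r,v\}$ with the vertex $v\in V_H$. The conflict graph $\calC$ is then built from two rules: (i) two star edges $\{r,u\}$ and $\{r,v\}$ conflict in $\calC$ iff $uv\in E_H$, so that $\calC$ restricted to the star is isomorphic to $H$; and (ii) every non-star edge (a ``chord'') conflicts with every other edge of $\calL$, which forces any chord used in a spanning tree to occupy a color class by itself.

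First I would establish the upper bound: the star $T^\star$ centered at $r$, colored according to an optimal coloring of $H$, gives a valid schedule of length $\chi(H)$. Next I would establish the matching lower bound by considering an arbitrary feasible solution: a spanning tree $T$ using $k$ chords and $n_H-k$ star edges incident to some subset $V_S\subseteq V_H$ with $|V_S|=n_H-k$. Rule (ii) forces the chords to use $k$ slots of their own, while rule (i) forces the star edges to use at least $\chi(H[V_S])$ further slots. Combining this with $\chi(H)\le \chi(H[V_S])+|V_H\setminus V_S|=\chi(H[V_S])+k$, which holds because one can extend any coloring of $H[V_S]$ to all of $H$ by giving each removed vertex a fresh color, yields a total schedule length of at least $\chi(H)$.

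The only point that requires real care is this lower-bound accounting, since in principle the algorithm could try to use chord edges to ``bypass'' high-chromatic portions of $H$. Rule (ii) is designed precisely so that any such bypass costs exactly as much as it saves, and the inequality $\chi(H)-\chi(H[V_S])\le k$ captures this balance. Once OPT$=\chi(H)$ is established, the $n^{1-\epsilon}$-inapproximability of the chromatic number (e.g., by Zuckerman) transfers directly to {\ourprob} with complete $\calL$, since $n=|V_H|+1$.
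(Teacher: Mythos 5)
Your proof is correct, but it takes a genuinely different route from the paper. The paper obtains this theorem as a one-line corollary of its preceding reduction (Thm.~\ref{thm:graphhard}, from \textsc{Strong Edge Coloring} with $\calC=L^2(\calL)$): it simply adds all missing edges to $\calL$ and declares each added edge in conflict with every other edge, so that using an added edge costs a dedicated slot and can never shorten a schedule, and the $n^{1-\epsilon}$-hardness is inherited from \cite{chalermsook2013}. You instead give a self-contained reduction directly from the chromatic number: a star centered at a fresh root whose edges encode the vertices of $H$, with all chords of the complete link graph made universal conflictors. Interestingly, the key gadget is the same in both arguments (filler edges that conflict with everything, so that any ``bypass'' costs at least as much as it saves); what differs is the base problem. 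Your accounting is sound: in a spanning tree with $k$ chords the chords occupy $k$ singleton slots, the used star edges induce a copy of $H[V_S]$ and hence need $\chi(H[V_S])$ further slots, and $\chi(H)\le\chi(H[V_S])+k$ closes the bound, giving $\mathrm{OPT}=\chi(H)$ exactly. What your route buys is simplicity and an exact gap: no additive $O(\log n)$ slack from the binary-tree gadget, and reliance only on the classical coloring inapproximability (Zuckerman/Feige--Kilian) rather than on the strong-edge-coloring hardness. What the paper's route buys is economy within its own narrative: it reuses the 2-hop-interference construction of Thm.~\ref{thm:graphhard}, showing that the hardness established there survives when the link graph is completed, whereas your conflict graph is purpose-built and unrelated to any interference structure on $\calL$ (which is perfectly admissible for the ``graph variant'' as stated).
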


\begin{proof}
We add to instance of Thm.~\ref{thm:graphhard} all edges that were not there and make them adjacent to \emph{all} other edges in the graph. If these new edges are used in a spanning tree, they have to be scheduled separately in individual time slots.
Thus, using them does not decrease the length of any schedule.
\end{proof}

\section{Implications to Signal Strength Models}
\label{sec:sinr}

We consider in this section the implementation and implication of our results to signal strength models, most importantly metric SINR model.

SINR-feasibility, besides the underlying metric, also depends on the \emph{transmission power control} regime. Different power control methods give different notions of feasibility. Nevertheless, it is known that for most interesting cases, SINR-feasibility has constant-inductive independence property. In particular, power control is usually split into two modes: \defn{fixed monotone power schemes}, where links use only local information, such as the link length, to define the power level, and \defn{global power control}, where all power levels are controlled simultaneously to give larger independent sets. The former includes the uniform power mode, where all links use equal power. Another technical issue is \emph{directionality} of links, which is not explicitly addressed by our general results, but will be addressed below.

Let us start the discussion from Euclidean metrics (or more generally doubling metrics). For the global power control mode, \cite{KesselheimSODA11} introduced a weight function $W$ and proved that with this function, the conflict graph of any set of links 
is constant-inductive independent (see \cite[Thm. 1]{KesselheimSODA11}), so our results apply here directly (except for directionality issues, addressed below). Similarly, for fixed monotone power schemes (excluding uniform power), \cite{sicomp17} showed that in order to get constant-inductive independence, one may take the natural weight function,  \defn{affectance} (also called \emph{relative} or \defn{normalized interference})~\cite[Thm.3.3]{sicomp17}. In all cases, the ordering $\prec$ corresponds to a non-decreasing order of links by length. 

For general metric spaces, a slightly more technical definition of inductive independence is used, where a fractional conflict graph $\calC=(L,W)$ is $(\rho,\gamma)$-inductive independent, 
w.r.t.\ an ordering $\prec$ of the links, if
for every link $e$ and every feasible set $I\in \calF$ with $e \prec I$, there is a subset $I'\subseteq I$ of size $|I'|\ge |I|/\gamma$, such that $W(I,e)+W(e,I)\le \rho$. The old definition corresponds to the setting $\gamma=1$. It is easily verified that Thms.~\ref{T:forest} and~\ref{T:inductiveness} extend to cover this new definition, with approximation ratios multiplied by a factor of $\gamma$. Now, the counterparts of the results from the previous paragraph in general metrics can be found in~\cite[Lemmas 2,4]{icalp11} and \cite[Thm.~1, Lemma 3]{KesselheimESA12}, where it is shown that with appropriate weight functions, feasibility for any fixed monotone power scheme (including uniform power), as well as feasibility with global power control, can be expressed by a fractional conflict graph, which is ($O(1),O(1)$)-inductive independent.

The claims above concern settings where the links have fixed directions. In particular, if we apply Thm.~\ref{T:inductiveness} to the weighted functions from the previous paragraph, then we should add ``there exists a direction of links, such that...'' to the claim. This issue is easily resolved for the global power control mode, where the weight function of \cite{KesselheimSODA11} does not depend on directions. Namely, it gives a schedule, such that whatever direction is assigned to the links, one can find a power assignment that makes it work (the power assignment could be different for different orientations of links).

For oblivious powers, the following trick applies. It is known that  for a set of links with some direction and an oblivious power assignment, and with the weight function $W$ defined in terms of the affectances, if $W(e,S)\le 1/2$ for all $e\in S$ (call this \defn{dual-feasibility}), then there is another oblivious power assignment (called the \emph{dual} of the original one) that makes $S$ feasible with the reversed directions of links~\cite{KV10}. 
Thus, we would like to have schedules with slots $S$ being also dual-feasible. To this end, it is enough to modify {\capalg}, so that the threshold $1/2$ in the acceptance condition is replaced with $1/4$, and the output set $S'$ is given by $S'=\{e\in S: (W(S,e)\le 1) \vee (W(e,S)\le 1/2)\}$. Very similar methods then show that this again gives an $O(\rho)$-approximation to the maximum feasible forest problem. The rest of the analysis is left intact, so we obtain an $O(\log n)$-approximation as before, but with schedule slots that are both feasible and dual-feasible. Then we can replace each slot with its two copies and revert the directions of links in one of the copies. Every link thus gets scheduled in both directions, while the schedule length increases by a factor of two.

Summarizing the observations above, we state the following theorem.

\begin{theorem}
  There is an $O(\log n)$-approximation to {\ourprob} problem in the SINR model in arbitrary metric spaces.
This holds both in the case of fixed monotone power assignments,
and for arbitrary power control.
It holds even when only a subset of the node-pairs are available as links (but interferences follow the metric SINR definitions).
\end{theorem}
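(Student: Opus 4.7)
The plan is to establish the theorem as a direct corollary of Theorem~\ref{T:inductiveness} applied to the conflict graph induced by an appropriate weight function, once two technical issues are handled: (i) the inductive independence notion has to be broadened to the two-parameter $(\rho,\gamma)$ variant, because that is the form in which the literature provides constant bounds for SINR in general metric spaces; and (ii) the directionality of links, which is inherent to SINR feasibility, must be reconciled with our undirected framework.

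For (i), I would first verify that Theorems~\ref{T:forest} and~\ref{T:inductiveness} carry over to the $(\rho,\gamma)$ setting at the cost of an extra factor of $\gamma$ in the approximation ratio. The only substantive modification is in bounding $|I_R|$ in the proof of Theorem~\ref{T:forest}: rather than applying $\rho$-inductive independence to $I_R$ directly, one first extracts a subset $I'_R\subseteq I_R$ with $|I'_R|\ge|I_R|/\gamma$ on which the inductive bound holds, yielding $|I_R|\le 2\rho\gamma|S|$; the set-cover bookkeeping of Theorem~\ref{T:inductiveness} propagates the extra $\gamma$ unchanged. With that in hand, I would plug in the weight functions from the cited literature: for arbitrary power control, the function of~\cite{KesselheimSODA11} (or its general-metric analog in~\cite{icalp11,KesselheimESA12}); for fixed monotone powers, the affectance~\cite{sicomp17,icalp11,KesselheimESA12}. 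Each is $(O(1),O(1))$-inductive independent in its appropriate metric, which produces the $O(\log n)$ factor.

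For (ii), the arbitrary power control case is immediate: the weight function of~\cite{KesselheimSODA11} is defined without reference to link directions, and a feasible slot under it is realizable by some power assignment for any subsequent choice of link orientations. The fixed monotone case is harder, and I would handle it via the dual-feasibility trick of~\cite{KV10}: a slot that is simultaneously feasible ($W(S,e)\le 1$) and dual-feasible ($W(e,S)\le 1/2$) becomes feasible under a dual oblivious power assignment once all link orientations are reversed. To force {\capalg} to output such slots, I would tighten its acceptance threshold from $1/2$ to $1/4$ and correspondingly adjust its output filter so that both inequalities are witnessed; a two-condition variant of the averaging calculation~(\ref{eq:star}) shows that a constant fraction of $S$ survives. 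Once each slot is both feasible and dual-feasible, I would duplicate every slot, reversing link directions in one copy, at the cost of a factor of two in schedule length.

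The main obstacle I anticipate is confirming that the modified {\capalg} still achieves an $O(\rho\gamma)$-approximation to the maximum feasible forest under the stronger joint feasibility/dual-feasibility requirement. This amounts to rerunning the proof of Theorem~\ref{T:forest} with the $1/4$ threshold and the two-sided filter, and checking that the cardinality bounds on $|I_R|$, $|I_T|$, and $|I_S|$ all still go through up to constants; the union bound over the two bad events costs only a constant factor. The calculations are routine extensions of those in Section~\ref{sec:greedy}, but the constants must be tracked carefully. No new conceptual ingredient is needed, since the inductive independence bounds, the duality lemma, and the set-cover potential argument are all already available.
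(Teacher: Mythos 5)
Your proposal follows the paper's own argument essentially step for step: extend Thms.~\ref{T:forest} and~\ref{T:inductiveness} to the $(\rho,\gamma)$-inductive independence setting at a factor-$\gamma$ loss, plug in the cited weight functions that are $(O(1),O(1))$-inductive independent in general metrics, handle directionality for global power control via the direction-independence of the weight function of \cite{KesselheimSODA11}, and for fixed monotone powers via the dual-feasibility trick of \cite{KV10} with the tightened $1/4$ threshold, a modified output filter, and doubling of slots with reversed orientations. The only minor deviations are cosmetic: the paper writes its modified filter with a ``$\vee$'' (your conjunctive reading, under which a constant fraction of $S$ survives by the averaging argument, is the sensible one), and in the $\gamma$-extension the subset guaranteed by the $(\rho,\gamma)$ definition depends on the reference link $f\in S$, so the bound on $|I_R|$ should be argued per link of $S$ rather than via a single global subset $I'_R$ — a verification the paper itself also leaves implicit.
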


These are the first results that hold in general metrics.
They are necessarily relative approximations, since in general metric spaces, there is no good upper bound on the
connectivity number, even for complete graphs.  Two simple examples
are the metric induced by the star $K_{1,t}$ with unit-length edges,
and the unit metric formed by distances on the unit-length clique metric.

\smallskip

For the case of points in the plane (i.e., a complete link graph with conflicts induced by distances), connectivity can be
achieved in $O(\log n)$ slots \cite{SODA12}. Since it is not known if $O(1)$ slots always suffice, this result is not
directly implied by Thm.~\ref{thm:conn}. However, it was also shown in \cite{SODA12} that the MST contains a feasible forest of $\Omega(n)$ edges. The rest of our analysis (using constant-inductive independence) then implies a result matching \cite{SODA12}.

\begin{corollary}
  Let $P$ be a set of points in the plane.  Then, \textsc{Conn} finds and schedules a spanning tree of $P$ in $O(\log n)$ slots.
\end{corollary}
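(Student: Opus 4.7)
My plan is to show that at each iteration of \textsc{Conn}, the current graph $\calL_k$ admits a feasible forest of size $\Omega(n_k)$, where $n_k=|V(\calL_k)|$, and then to combine this with Thm.~\ref{T:forest} and the constant inductive independence of the planar SINR conflict graph to conclude that each iteration contracts away a constant fraction of the vertices.

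First I would fix $\rho=O(1)$, which is the inductive independence of the planar SINR conflict graph (via the weight function of \cite{KesselheimSODA11} discussed earlier in this section). Next, at iteration $k$ I would interpret the contracted graph $\calL_k$ as a complete link graph over a pointset of size $n_k$ in the plane, by designating one representative point from each super-node obtained by the contractions so far; the induced fractional conflict graph is precisely the restriction of the SINR conflict relation to those representative edges. The quoted result of \cite{SODA12} --- that an MST on any planar pointset contains a feasible subforest of linear size --- applied to these representatives then gives $x_k = \Omega(n_k)$, where $x_k$ denotes the size of a maximum feasible forest in $\calL_k$. By Thm.~\ref{T:forest} applied inside iteration $k$, {\capalg} returns a feasible forest $S_k$ of size $\Omega(x_k/\rho) = \Omega(n_k)$, so $n_{k+1} \le n_k - |S_k| \le (1-\delta)n_k$ for some absolute constant $\delta>0$.

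Iterating this geometric decrease shows that after $O(\log n)$ rounds we have $n_k=1$, i.e.\ a spanning tree has been assembled. Since each round contributes exactly one slot (the single feasible color class $S_k$), the total schedule length is $O(\log n)$, matching the bound of \cite{SODA12}. The only step that requires any care is the transfer of the \cite{SODA12} $\Omega(n)$-feasible-forest guarantee to the contracted instances: this is immediate once we observe that contraction preserves the planar/metric structure of the representatives and that feasibility under the SINR weight function is a purely geometric property of the chosen edges, so no new ideas are needed beyond the already-cited lemma.
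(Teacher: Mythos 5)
Your proposal is correct and follows essentially the same route as the paper: invoke the result of \cite{SODA12} that the MST of a planar pointset contains a feasible forest of $\Omega(n)$ edges (applied at each iteration to one representative point per contracted super-node, which is legitimate since the link graph is complete and contraction leaves $\calC$ intact), then use Theorem~\ref{T:forest} with constant inductive independence to get a constant-fraction vertex decrease per round, hence $O(\log n)$ slots. This matches the paper's argument, which it states only as a brief remark.
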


\mypar{Steiner trees}
In the geometric SINR model with a fixed monotone power scheme (with not all links available), we reduce the problem to a graph question as follows.
It was observed in \cite{us:talg12} that links of the same length class behave approximately like unit-disk graphs, where a length class refers to links whose lengths differ by at most a factor of $2$.
Namely, 
there are constants $c_1$ and $c_2$ such that 
for a set $S$ of links of length approximately $\ell$, if all links are of mutual distance greater than $c_2 \ell$,
then they form a feasible set, whereas any pair of links in $S$ of distance at most $c_1 \ell$ must be scheduled separately.

We modify the reduction to MMST to that of the 
graph construction so that 
weight of link $f$ along dimension $e$ is 1 only if $f$ is a post-neighbor of $e$ in $\calC$ \emph{and} $f$ and $e$ are of the same length class.
We then take the resulting tree and schedule the length classes separately, at an extra cost of $O(\log \Lambda)$ (the number of length classes).

\begin{corollary}
There is a $O(\log \Lambda \log n)$-approximation algorithm for {\steinerprob} in the geometric SINR model,
under  any fixed monotone power scheme.
\end{corollary}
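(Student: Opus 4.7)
The plan is to follow the graph-setting reduction to MMST from Section~\ref{sec:steiner} almost verbatim, but to incorporate the observation (from \cite{us:talg12}) that within a single length class the conflict structure behaves like a disk graph. Concretely, I would set up the MMST instance on $\calL$ with one resource per edge of $\calL$ and the order $\prec$ given by non-decreasing length, but now the weight of link $f$ along dimension $e$ is set to $1$ only when $f$ is a post-neighbor of $e$ in $\calC$ \emph{and} $f, e$ lie in the same length class, and $0$ otherwise. Then I would invoke the $O(\log n)$-approximation of Bil\`o et al.\ to obtain a Steiner tree $T$ whose load vector has $\ell_\infty$-norm $Z$.

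For the upper bound on the schedule length, by the choice of weights, for every length class $C_i$ and every $e\in C_i$, the number of post-neighbors of $e$ that belong to $T\cap C_i$ is at most $Z$. Hence $\calC[T\cap C_i]$ is $Z$-inductive and can be greedily colored with $Z+1$ colors. By the length-class property, a coloring restricted to a single length class can be realized as an SINR-feasible schedule at a constant-factor cost under any fixed monotone power scheme. Scheduling the $O(\log\Lambda)$ length classes sequentially then gives a valid schedule of $T$ of total length $O(Z \log \Lambda)$.

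For the lower bound, let $T^*$ be an optimal Steiner tree with optimal schedule length $\chi^*$, and let $Z^*$ be the $\ell_\infty$-norm of its load vector under our weights; then $Z = O(\log n)\cdot Z^*$ by \cite{Bilo17}. Pick $e$ attaining $Z^*$, and let $N_e$ be the set of its $Z^*$ post-neighbors in $T^*$ of the same length class. Since all links in $\{e\}\cup N_e$ have lengths within a factor of $2$ and mutually interfere with $e$, the induced conflict subgraph on $N_e$ is (essentially) a disk graph, which is $O(1)$-simplicial, so $N_e$ contains a clique of size $\Omega(Z^*)$. Any SINR-feasible schedule of $T^*$ must put that clique into distinct slots, so $\chi^* = \Omega(Z^*)$. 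Combining, the algorithm's schedule has length $O(Z \log \Lambda) = O(\chi^* \log n \log \Lambda)$.

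The main obstacle, as I see it, is not the MMST black-box call but rather the careful bookkeeping around the length-class approximation: we need one direction of the disk-graph analogy (clique in $N_e$) to convert the MMST lower bound $Z^*$ into a lower bound on $\chi^*$, and the reverse direction (independent set in $T\cap C_i$ gives an SINR-feasible slot) to convert our coloring into a legal SINR schedule. Both rely on the same constants $c_1,c_2$ of \cite{us:talg12}; once these are absorbed, the remaining argument is a direct replay of Section~\ref{sec:steiner} with an additional $O(\log\Lambda)$ factor coming purely from scheduling the length classes one at a time. Directionality of links is handled as in Section~\ref{sec:sinr}, at the cost of another factor of $2$.
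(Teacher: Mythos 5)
Your proposal matches the paper's argument essentially verbatim: the paper proves this corollary by exactly the same modification of the MMST weights (counting only same-length-class post-neighbors), invoking the length-class/disk-graph observation of \cite{us:talg12} for both the feasibility of independent sets and the clique-based lower bound, and scheduling the $O(\log\Lambda)$ length classes sequentially. You merely spell out the bookkeeping (the $c_1$ versus $c_2$ thresholds and the greedy $(Z+1)$-coloring per class) that the paper leaves implicit, so no gap to report.
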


Using power control, we can do considerably better.
The main result of \cite{us:stoc15} shows that for any set $L$ of links, there is an unweighted conflict graph $\calC(L)$, such that every independent set in $\calC$ is feasible, and the chromatic number of $\calC$ is at most $O(\log^*\Delta)$ factor away from the optimum schedule length of $L$ (using global power control). Moreover, $\calC$ is constant-simplicial~\cite[Prop. 1]{us:stoc15}.

\begin{corollary}
  There is a $O(\log n \log^* \Lambda)$-approximation algorithm for {\steinerprob} in the geometric SINR model
with global power control.
\end{corollary}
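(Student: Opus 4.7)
The plan is to compose the $O(\eta \log n)$-approximation for {\steinerprob} from Section~\ref{sec:steiner} with the SINR-to-graph abstraction of \cite{us:stoc15}, exactly paralleling the derivation of the previous corollary for fixed monotone power. First, I would invoke \cite[Prop.~1]{us:stoc15} to obtain, for the input link set $L$, an unweighted conflict graph $\calC(L)$ that is $O(1)$-simplicial, such that every independent set of $\calC(L)$ is SINR-feasible under some global power assignment, and whose chromatic number is within an $O(\log^*\Lambda)$ factor of the optimum SINR schedule length of $L$ under global power control.

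Second, I would feed the instance $(\calL, \calC(L))$ into the $O(\eta \log n)$-approximation algorithm of Section~\ref{sec:steiner}. With $\eta = O(1)$, this produces a Steiner tree $T$ together with a proper coloring of $\calC(L)[T]$ using $O(\log n) \cdot \chi^\star$ colors, where $\chi^\star = \min_{T'} \chi(\calC(L)[T'])$ ranges over Steiner trees $T'$ of $\calL$ that span the terminals. Since each color class is independent in $\calC(L)$, it is by construction SINR-feasible under some power assignment, so the output is a legitimate schedule in the SINR model with global power control.

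Third, I would charge $\chi^\star$ against the \steinerprob optimum $OPT$ under global power control. Let $T^\dagger$ be an optimal Steiner tree with schedule length $OPT$. Since each of the $OPT$ slots used by $T^\dagger$ is SINR-feasible, applying \cite{us:stoc15} to the link subset $T^\dagger$ gives $\chi(\calC(L)[T^\dagger]) \le O(\log^*\Lambda) \cdot OPT$, hence $\chi^\star \le O(\log^*\Lambda) \cdot OPT$. Multiplying by the $O(\log n)$ factor from the Steiner algorithm yields the claimed $O(\log n \log^*\Lambda)$-approximation, establishing the corollary.

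The main subtlety I anticipate is verifying that the conflict graph construction of \cite{us:stoc15} is compatible with restriction to a subtree: one needs the $O(1)$-simpliciality and the $O(\log^*\Lambda)$ chromatic bound to survive when $\calC(L)$ is restricted to the link set of an arbitrary Steiner tree (and in particular to $T^\dagger$), rather than applying only to $L$ as a whole. Since the construction in \cite{us:stoc15} is local (defined from pairwise/local signal relations among links), the induced subgraph on any link subset should coincide with, or be dominated by, the conflict graph built directly on that subset, so both properties are inherited. Confirming this inheritance at the level of detail of \cite[Prop.~1]{us:stoc15} is the only non-mechanical step; everything else is a direct plug-in of results already established in the paper.
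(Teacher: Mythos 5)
Your proposal is correct and follows essentially the same route as the paper: invoke the constant-simplicial conflict graph of \cite[Prop.~1]{us:stoc15} whose independent sets are SINR-feasible under global power control and whose chromatic number is within $O(\log^*\Lambda)$ of the optimum schedule length, then plug it into the $O(\eta\log n)$ Steiner scheduling algorithm with $\eta=O(1)$. The subtlety you flag about restricting to the optimal tree's link set is handled because the result of \cite{us:stoc15} applies to an arbitrary set of links, so it can be applied directly to the links of the optimal Steiner tree.
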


A similar result with $O(\log\log \Lambda)$-factor holds also for certain monotone power schemes (but not, for instance, uniform power) \cite{us:fsttcs15}.

\mypar{Hardness}
A special Missing Links variant of the geometric case is where the nodes/links are embedded in the plane
and all interferences are either zero or follow the SINR model (with either fixed power or global power control).

\begin{theorem}
The geometric Missing Links variant is $n^{1-\epsilon}$-hard to approximate, for any $\epsilon > 0$.
It is also $\Lambda^{2-\epsilon}$-hard, where $\Lambda$ is the ratio between the longest to the shortest node distance.
This holds even if all unavailable links are missing links.
\label{thm:missingharddelta}
\end{theorem}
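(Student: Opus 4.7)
I will reduce from Strong Edge Coloring, which is $n^{1-\epsilon}$-hard on general graphs by~\cite{chalermsook2013}, adapting the combinatorial construction of Theorem~\ref{thm:graphhard} to the geometric Missing Links setting. Given an instance $G' = (V', E')$, I first apply the construction of Theorem~\ref{thm:graphhard} to obtain the link graph $\calL$ and the target conflict structure $\calC = L^2(\calL)$. I then embed the vertices of $\calL$ in the plane so that, among the edges of $\calL$ (the available links), the SINR interference induces exactly $L^2(\calL)$; all other node pairs are declared missing, and hence contribute no interference.

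The embedding uses two length scales. For each $v \in V'$, I form a tiny cluster of diameter $\delta$ containing $v_1$, $v_2$, and one ``edge-copy'' per edge of $G'$ incident to $v$. For each edge $uv \in E'$, the corresponding edge-copies from $u$'s and $v$'s clusters are co-located in an auxiliary micro-cluster of diameter $O(\delta)$, and these copies serve as the endpoints of the bipartite links of $\calL$. The clusters are arranged on a $\sqrt{n} \times \sqrt{n}$ grid of spacing $D$, with $D/\delta$ chosen to be a sufficiently large constant so that, by standard SINR path-loss bounds ($\alpha > 2$), two links whose endpoints lie in non-adjacent clusters cause each other negligible interference, while two links sharing a cluster (or a micro-cluster) are within SINR-conflict range. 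The binary tree from Theorem~\ref{thm:graphhard} is placed as a separate coarse substructure whose edges can be scheduled in $O(\log n)$ slots independently. Verifying both directions as in the proof of Theorem~\ref{thm:graphhard} shows that a schedule of length $t$ of the geometric instance yields a strong edge coloring of $G'$ with $O(t)$ colors, and conversely a $c$-coloring yields an $O(c + \log n)$-slot schedule, transferring the $n^{1-\epsilon}$ hardness.

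For the $\Lambda^{2-\epsilon}$-hardness, the $O(n)$ nodes lie in a region of diameter $O(D\sqrt{n})$ with minimum pairwise distance $\Omega(\delta)$; since $D/\delta$ is a constant, $\Lambda = O(\sqrt{n})$. Hence the $n^{1-\epsilon}$ bound becomes $\Lambda^{2(1-\epsilon)}$, which subsumes $\Lambda^{2-\epsilon}$-hardness by adjusting $\epsilon$. The main obstacle is the geometric realization of $L^2(\calL)$: since $G'$ is arbitrary (not necessarily planar or unit-disk representable), its adjacencies cannot be directly encoded by spatial proximity without distortion. The edge-copy gadget sidesteps this by letting each edge of $G'$ choose its own micro-cluster location independently, while missing links eliminate spurious conflicts between copies of the same vertex and any other non-$\calL$ pair. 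The remaining work is the SINR computation that fixes the constant $D/\delta$, plus handling link directionality and, if desired, arbitrary power control exactly as in Section~\ref{sec:sinr}; these are routine given the tools already developed in the paper.
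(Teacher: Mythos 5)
Your high-level plan (reduce from \textsc{Strong Edge Coloring} via the construction of Theorem~\ref{thm:graphhard}, embed it in the plane, and use missing links to kill spurious interference) is the same as the paper's, but your embedding does not work, and the failure is structural rather than a matter of ``routine'' SINR calculations. The paper places \emph{all} of $V_1$ and $V_2$ in two unit squares at unit distance, so every available link has length in $[1,4]$; then any two links that must conflict (sharing a vertex or joined by an available edge) exert constant interference $\ge 1/4^\alpha$ on each other, and taking $\beta\ge 4^\alpha$ followed by signal strengthening \cite{GHW14} forces feasible sets to be induced matchings. This is essential because in the Missing Links model the missing-link device can only \emph{remove} interference; it can never \emph{create} a conflict. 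Conflicts must be realized by the SINR formula over actual distances. In your construction the vertex clusters sit on a $\sqrt{n}\times\sqrt{n}$ grid of spacing $D$, so two links corresponding to edges $uv,wx\in E'$ with, say, $vx\in E'$ can lie in clusters at distance $\Theta(D\sqrt{n})$; the interference between them (even ``along'' the available connecting edge, which now has length $\Theta(D\sqrt{n})$) is negligible, so they can be scheduled together even though the reduction requires them to conflict. Hence the direction ``short schedule $\Rightarrow$ strong edge coloring with few colors'' collapses, and no choice of the constant $D/\delta$ fixes this. (The grid separation you introduce to suppress interference between non-adjacent links is also unnecessary: under the Missing Links semantics such pairs interfere not at all, whatever their distance --- this is exactly what the paper exploits by keeping everything at unit scale.)

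There are further unresolved problems with the gadget itself. An edge-copy cannot simultaneously lie in $u$'s cluster of diameter $\delta$ and be co-located within $O(\delta)$ of its partner copy in $v$'s cluster when the two clusters are $\ge D$ apart, so the geometry as described is inconsistent. If instead the links of $\calL$ run between co-located edge-copies, you have changed the link graph: you must add connector links from $v_1,v_2$ to the $\deg_{G'}(v)$ edge-copies, and these all share a vertex, forcing $\Omega(\Delta(G'))$ slots even on YES instances, which can destroy the gap; you do not re-establish the correspondence with strong edge coloring for the modified instance. Finally, the step you call routine --- why links ``within SINR-conflict range'' actually cannot coexist --- requires fixing the threshold (the paper takes $\beta\ge 4^\alpha$) and then invoking signal strengthening to return to arbitrary $\beta$; without it, constant interference need not exceed $1/\beta$. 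The fix is to follow the paper: embed $V_1$ and $V_2$ as $1/\sqrt{n}$-spaced meshes in two unit squares at unit distance, attach the binary tree (of diameter $O(\log n)$) with its incident available edges restricted to tree edges, and read off $\Lambda\le 4\sqrt{n}\log n$, which gives the $\Lambda^{2-\epsilon}$ bound you want.
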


\begin{proof}
We embed the instance of the previous theorem in the plane.
The nodes of $V_1$ are located in a unit square in a mesh pattern, $1/\sqrt{n}$ apart in $\sqrt{n}$ columns $\sqrt{n}$ abreast. 
At a unit distance, a similar unit square holds the nodes of $V_2$.
The length of an edge in $\calL$ (in distance in the plane) is then between 1 and 4.

An induced matching in $\calL$ corresponds to a set of links with no mutual interference.
On the other hand, a pair of links that are incident on a common edge or share a vertex, will receive interference from each other according to the SINR formula (using the shared edge or each other). Given that distances along available edges vary only by a constant factor, the interference between the links is a constant (specifically, at least $1/4^\alpha$, where $\alpha$ is the ``pathloss'' constant of the SINR model).
Thus, in the setting where the SINR threshold is at least the reciprocal of that constant (i.e., $\beta \ge 4^\alpha$), feasible sets are necessarily induced matchings in $\calL$. We can then conclude by recalling a ``signal-strengthening'' result \cite{GHW14} that shows that varying the threshold by a constant factor only affects the schedule length by a constant factor.

The longest node distance is at most $\log n$, which is from the root of the binary tree to its leaves,
while the shortest distance is $1/\sqrt{n}$. Thus, 
$\Lambda \le 4 \log n \sqrt{n}$, and $n^{1-\epsilon} \ge \Lambda^{2-\epsilon'}$, for some $\epsilon' \ge \epsilon/3$.

We can restrict the available edges incident to (non-leaf) nodes on the binary tree to the tree edges alone. 
Thus, non-leaf nodes in the tree must be connected via the tree edges. Then, all unavailable edges are missing edges.
\end{proof}

\section{Open Issues}
\label{sec:open}

Many related problems are left addressing; we list the most prominent ones.
\begin{itemize}
\item Latency minimization: Bounding the time it takes for a packet to filter through the tree from a leaf to a root (and back).
This requires optimizing both the height of the tree as well as the ordering of the links in the schedule.
\item Directed case: Finding an arborescence. This requires new techniques, as our argument crucially depends on the graph being undirected.
\item Distributed algorithms. This relates also to the issue of detecting or learning whether a link is usable/reliable or not.
\end{itemize}

\bibliographystyle{abbrv}
\bibliography{references}

\appendix

\section{SINR Definitions}
\label{sec:sinrdefs}

For completeness, we include here various definitions and facts regarding the SINR model.

The \defn{abstract SINR} model has two key properties: \textbf{(i)} signal decays as it travels from a sender to a
receiver, and \textbf{(ii)} interference -- signals from other than the intended transmitter -- accumulates.
Transmission succeeds if and only if the interference is below a given threshold.
The \defn{Metric SINR} model additionally assumes
\emph{geometric path-loss}: that signal decays proportional to a fixed polynomial of the distance, where the \emph{pathloss constant} $\alpha$ is assumed to be an arbitrary
but fixed constant between 1 and 6.  This assumption is valid with $\alpha=2$ in free space and perfect vacuum
\cite[Sec.~3.1]{Goldsmith}.
In the \emph{Euclidean SINR} model, the distances are planar.

Formally, a \emph{link} $l_v = (s_v, r_v)$ is given by a pair of nodes, sender $s_v$ and a receiver $r_v$, which are located in a metric space. Let $d(x,y)$ denote the distance between points $x$ and $y$ in the metric, and use the shorthand $d_{vw} = d(s_v,r_w)$.
The strength of a signal transmitted from point $x$ as received at point $y$ is $d(x,y)^\alpha$.
The \emph{interference} $I_{uv}$ of sender $s_{u}$ (of link $l_u$) on the receiver $r_v$ (of link $l_v$) is $P_u / d_{uv}^\alpha$,
where $P_v$ is the power used by $s_v$.  When $u=v$, we refer to $I_{vv}$ as the \emph{signal strength} of link $l_v$.
If a set $S$ of links transmits simultaneously, then the \emph{signal to noise and interference ratio} (SINR) at $l_v$
is
\begin{equation}
 \text{SINR}_v := \frac{I_{vv}}{N + \sum_{u \in S} I_{uv}} = 
   \frac{P_v / d_{vv}^\alpha}{N + \sum_{u \in S} P_v / d_{uv}^\alpha}\ ,
\label{eqn:sinr}
\end{equation}
where $N$ is the ambient noise.
The transmission of $l_v$ is \emph{successful}
iff $\text{SINR}_v \ge \beta$, where $\beta \ge 1$ is a hardware-dependent constant.  

\paragraph*{Additional definitions: Power, affectance, separability}

We will work with a total order $\prec$ on the links, where $l_v \prec l_w$ implies that $d_{vv} \le d_{ww}$. 
A power assignment $\calP$ is \emph{monotone} if both $P_v \le P_w$
and $\frac{P_w}{d_{ww}^\alpha} \le \frac{P_v}{d_{vv}^\alpha}$ hold whenever $l_v \prec l_w$.
This captures the main power strategies, including uniform and linear power.

The affectance $a^{\calP}_w(v)$ \cite{GHW14,KV10} of link $l_w$ on link $l_v$ under power
assignment $\calP$ is the interference of $l_w$ on $l_v$ normalized to
the signal strength (power received) of $l_v$, or
\[ a_w(v) = \min \left(1, c_v \frac{P_w}{P_v} \frac{d_{vv}^\alpha}{d_{wv}^\alpha}\right)\ ,\]
where $c_v= \frac{\beta}{1-\beta N/(P_v/d_{vv}^\alpha)} > \beta$ is a factor depending only on universal constants and the
signal strength $P/d_{vv}^\alpha$ of $l_v$, indicating the extent to which the ambient noise affects the transmission.  We drop
$\calP$ when clear from context.  Furthermore let $a_v(v) = 0$. For a set $S$ of links and link $l_v$, let $a_v(S) =
\sum_{l_w \in S} a_v(w)$ be the \emph{out-affectance} of $v$ on $S$ and $a_S(v) = \sum_{l_w \in S} a_w(v)$ be the
\emph{in-affectance}.
Assuming $S$ contains at least two links we can rewrite Eqn.~\ref{eqn:sinr} as $a_S(v) \leq 1$ and this is the form we
will use.  A set $S$ of links is \emph{feasible} if $a_S(v) \leq 1$ and more generally \emph{$K$-feasible} if $a_v(S)
\le 1/K$.

The following theorem shows that the interference model assumptions of Sections \ref{sec:mst} and \ref{sec:short} hold for geometric SINR. This fact is widely known, see e.g., \cite{SODA12}. We outline a proof for completeness.
\begin{theorem}[\cite{SODA12}]
If a link set is $s$-sparse, then it can be scheduled in $O(s)$ slots in geometric SINR,
and if it is $d$-dense, then it requires $\Omega(d)$ slots.
\label{thm:sparsedense}
\end{theorem}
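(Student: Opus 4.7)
I would handle the two directions of the theorem separately, both by standard SINR-style arguments in the spirit of \cite{SODA12}.

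For the \emph{sparse} upper bound, the plan is to exhibit an explicit uniform-power schedule based on a grid tiling of the plane by axis-aligned cells of side length $M\ell$, for a sufficiently large absolute constant $M$. By the $s$-sparseness hypothesis and a covering argument (every $M\ell$-square is the union of $M^2$ axis-aligned $\ell$-squares), each cell contains endpoints of at most $O(M^2 s)=O(s)$ links. Enumerate the links in each cell arbitrarily from $1$ to $O(s)$; in slot $i$, every cell simultaneously transmits its $i$th link at a common power $P$. To verify SINR at a transmitter $e$ (of length at most $\ell$) in some cell $C_0$, note that the signal strength is at least $P/\ell^\alpha$, while each simultaneously active interferer sits in a cell at grid-distance $k\ge 1$, hence at Euclidean distance $\Omega(kM\ell)$ from the receiver of $e$. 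Since $O(k)$ cells lie at grid-distance exactly $k$, the total interference is bounded by
\[
\sum_{k\ge 1} O(k)\cdot \frac{P}{(kM\ell)^\alpha}  = O\!\left(\frac{1}{M^\alpha}\right)\cdot \frac{P}{\ell^\alpha},
\]
where the series converges because $\alpha>2$. Choosing $M$ a large enough constant drives the interference-to-signal ratio below $1/\beta$, certifying a feasible $O(s)$-slot schedule.

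For the \emph{dense} lower bound, the plan is to show that every single feasible slot can contain only $O(1)$ links of length at least $\ell$ with an endpoint in the witnessing $\ell$-square $Q$, so that $d$ such links require $\Omega(d)$ slots. Fix a feasible set $T$ of such links. By pigeonhole, at least $|T|/2$ of its links have their sender in $Q$ (call this side $T^{\mathrm{snd}}$) or at least $|T|/2$ have their receiver in $Q$ ($T^{\mathrm{rcv}}$). For $T^{\mathrm{snd}}$, take the shortest link $e$; for every other $f\in T^{\mathrm{snd}}$, monotonicity of the power assignment gives $P_f\ge P_e$, and the triangle inequality gives
\[
d(s_f,r_e)\le d(s_f,s_e)+d(s_e,r_e)\le \sqrt{2}\,\ell+\ell_e\le (1+\sqrt{2})\,\ell_e,
\]
so the affectance of $f$ on $e$ satisfies $a_f(e)\ge (P_f/P_e)\,(1/(1+\sqrt{2}))^\alpha=\Omega(1)$. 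Summing over $f\in T^{\mathrm{snd}}\setminus\{e\}$ and using feasibility $a_T(e)\le 1$ yields $|T^{\mathrm{snd}}|=O(1)$. For $T^{\mathrm{rcv}}$, bin its links into length classes $[2^i\ell,2^{i+1}\ell)$ and apply the symmetric argument inside each class, using the monotone-power inequality $P_f/\ell_f^\alpha\ge P_e/\ell_e^\alpha$ to absorb the $\ell_f/\ell_e$ factors; the standard bidirectional property of monotone schemes then collapses the length-class bound to $O(1)$ overall.

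The main obstacle is the receiver-anchored side of the dense case: when the endpoint in $Q$ is a receiver rather than a sender, the clean triangle-inequality bound on $d(s_f,r_e)$ degrades in terms of $\ell_f/\ell_e$, so the $\Omega(1)$ affectance lower bound only survives within a single length class. Reconciling this with the $O(1)$-per-slot target requires either restricting to monotone power schemes where the dual affectance is equally bounded, or invoking the bidirectional-feasibility equivalence for such schemes; these are routine SINR facts and follow the argument of \cite{SODA12}.
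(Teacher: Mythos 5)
Your overall strategy (grid partition plus uniform power for the sparse bound; ``only $O(1)$ links anchored in the witness square per slot'' for the dense bound) is the same as the paper's, but as written both halves have gaps. In the sparse direction, letting \emph{every} cell of the $M\ell$-grid transmit in the same slot breaks the separation claim: for grid-distance $k=1$ the cells share a boundary, and since a link is attached to its cell by only one endpoint, the sender of the link assigned to a neighboring cell may lie arbitrarily close to the receiver you are protecting (e.g.\ $r_f$ just inside the neighbor cell, $s_f$ just across the boundary next to $r_e$). Then the single interference term $P/d(s_f,r_e)^\alpha$ already dwarfs the signal $P/\ell_e^\alpha$, and no choice of $M$ rescues the SINR test, so the claimed $\Omega(kM\ell)$ distance (and hence the convergent-sum bound) is false at $k=1$. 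The standard repair, which is exactly what the paper's sketch does, is to color the cells with $O(1)$ colors so that simultaneously active cells are at mutual distance at least $c\ell$ for a large constant $c$ (which also absorbs the $\pm\ell$ slack from the unattached endpoint) and to cycle through the color classes; this only multiplies the slot count by a constant, keeping $O(s)$. Note also that your convergence step needs $\alpha>2$ in the plane (the same implicit assumption underlies the area argument of \cite{us:talg12}).

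In the dense direction the gap is more substantive: your argument is tied to monotone power assignments (you use $P_f\ge P_e$, and for the receiver-anchored half you need length classes plus a dual/bidirectional-feasibility fact that you leave as a black box), so at best it lower-bounds schedules under fixed monotone power. The theorem is invoked to lower-bound the optimum schedule length, including under global power control, so a monotone-only bound proves a weaker statement. The paper's argument is power-free and needs no sender/receiver case split: if $e$ and $f$ each have an endpoint in the $\ell$-square with $\ell\le\min(\ell_e,\ell_f)$, the triangle inequality gives $d(s_f,r_e)\cdot d(s_e,r_f)=O(\ell_e\ell_f)$, whereas pairwise feasibility forces $d(s_f,r_e)\cdot d(s_e,r_f)\ge \beta^{2/\alpha}\,\ell_e\ell_f$, because in the \emph{product} of the two cross-affectances the power ratio $P_f/P_e$ cancels. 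Hence for $\beta$ a sufficiently large constant no two such links can share a slot under \emph{any} power assignment, and signal strengthening \cite{GHW14} transfers this to arbitrary fixed $\beta$ at only a constant-factor loss, giving the $\Omega(d)$ bound directly, with no length classes and no appeal to dual power assignments \cite{KV10}. If you intend to keep your route, you must at least state that it covers only monotone schemes, and carry out the dual-feasibility step for the receiver-anchored links rather than cite it as routine.
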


\begin{proof}
The former claim essentially follows from the results of~\cite{us:talg12}. Here is a crude sketch of a proof. Let $L$ be a $s$-sparse set of links of length at most $\ell$. Partition the plane into squares of side $\ell$. Assign each link to a square where it has an endpoint, ties broken arbitrarily. It is easy to color the squares using constant number of colors, such that for each color class $\calC$, the distances between the squares in $\calC$ are greater than $c\ell$, where $c$ is a constant of our choice. Let $\calC$ be any color class. Using sparsity, partition the set of links assigned to the squares in $\calC$ into at most $s$ subsets $S_1,S_2,\dots,S_k$, such the intersection of each $S_i$ and each square in $\calC$ is at most a single link. Then a standard area argument (see, e.g.~\cite{us:talg12}) shows that if the constant $c$ is sufficiently large, $S_i$ are feasible sets (e.g. under uniform power assignment). Note that it is important here that all links have length at most $\ell$, so they are ``attached'' to their corresponding squares.

Now consider a subset $S \subseteq L$ that is $s(L)$-dense, and let $\ell$ be the minimum link length in $S$,
and let $X$ be a $\ell$-by-$\ell$ square with $s(L)$ endpoints from $S$.
Let $T \subseteq S$ be the subset of links with endpoints in $X$, and note that $|T|\ge s(L)/2$.
The distance between any two points within $X$ is at most $\sqrt{2} \ell$. 
It follows that no pair of links in $T$ can coexist in a $\sqrt{2}^\alpha$-feasible slot.
That is, $T$, and therefore also $L$, requires $|T| \ge s(L)/2$ slots when $\beta \ge \sqrt{2}^\alpha$.
By signal strengthening, the exact value of $\beta$ changes the schedulability of the set only by a constant factor.
\end{proof}

\end{document}